\numberwithin{theorem}{section}
\numberwithin{equation}{section}
\DeclareMathAlphabet{\pcal}{OMS}{zplm}{m}{n}
\newcommand{\cS}{{\cal S}}
\newcommand{\cA}{{\cal A}}
\newcommand{\cB}{{\cal B}}
\newcommand{\cP}{{\cal P}}
\newcommand\norm[1]{\lVert#1\rVert}
\newcommand\bone{\mathbf{1}}
\renewcommand{\bar}{\overline}
\def\xvec{{\bf x}} 
\def\yvec{{\bf y}}
\def\uvec{{\bf u}}
\def\vvec{{\bf v}}
\def\bvec{{\bf b}}
\def\qvec{{\bf q}}
\def\tvec{{\bf t}}
\newcommand{\bs}{\boldsymbol}
\newcommand{\R}{{\mathbb{R}}}
\newcommand{\calL}{{\mathcal{L}}}
\newcommand{\Rn}{\R^n}
\newcommand{\Rnn}{{\R^{n\times n}}}
\newcommand{\wt}{\widetilde}
\newcommand{\vertiii}[1]{{\left\vert\kern-0.25ex\left\vert\kern-0.25ex\left\vert #1 
    \right\vert\kern-0.25ex\right\vert\kern-0.25ex\right\vert}}
\newcommand{\dref}[1]{Definition~\ref{#1}}
\newcommand{\sref}[1]{Section~\ref{#1}}
\newcommand{\Hmx}{\mathit \Theta}
\renewcommand{\epsilon}{\varepsilon}
\renewcommand{\tilde}{\widetilde}
\title{Node and layer eigenvector centralities for multiplex networks\thanks{Author's accepted version: this is the peer-reviewed version of this manuscript, which is now   published on SIAM Journal on Applied Mathematics \url{https://doi.org/10.1137/17M1137668}. \funding{The work of F.T. was funded by the European Union's Horizon 2020 research and innovation programme under the MarieSk\l odowska-Curie individual fellowship ``MAGNET'' grant agreement no.\ 744014. The work of F. A. has been supported by by the 
 EPSRC  grant EP/M00158X/1. The work of A. G. has been supported  by the ERC grant 307793 ``NOLEPRO''.}}}
\author{Francesco Tudisco\footnotemark[2], Francesca Arrigo\footnotemark[2] and Antoine Gautier\footnotemark[3]}
\begin{document}
\maketitle

 \renewcommand{\thefootnote}{\fnsymbol{footnote}}
  \footnotetext[2]{Department of Mathematics and Statistics, 
 University of Strathclyde, Glasgow, G11XH, UK (\email{f.tudisco@strath.ac.uk}, \email{francesca.arrigo@strath.ac.uk}).} 
%  \footnotetext[3]{Department of Mathematics and Statistics,  University of Strathclyde, UK (\email{francesca.arrigo@strath.ac.uk}).}
 \footnotetext[3]{Department of Mathematics and Computer Science, 
 Saarland University, Saarbruecken, 66123, Germany  (\email{antoine.gautier@uni-saarland.de}).}

 \renewcommand{\thefootnote}{\arabic{footnote}}

%%%%%%%%%%%%%%%%%%%%%%%%%%%%%%%%%%%%%%%%%%%%%%%%%%%%%%%%%%%%%%%%%%%%%%%%%%
%%%%%%%%%%%%%%%%%%%%%%%%%% ABSTRACT %%%%%%%%%%%%%%%%%%%%%%%%%%%%%%%%%%%%%%
%%%%%%%%%%%%%%%%%%%%%%%%%%%%%%%%%%%%%%%%%%%%%%%%%%%%%%%%%%%%%%%%%%%%%%%%%%
\begin{abstract}
Eigenvector-based centrality measures are among the most popular centrality measures in network science. 
The underlying idea is intuitive and the mathematical description is extremely simple in the framework of standard, mono-layer networks. Moreover, several efficient computational tools are available for their computation.  
Moving up in dimensionality, several efforts have been made in the past to describe an eigenvector-based centrality measure that generalizes Bonacich index to the case of multiplex networks. 
In this work, we propose a new definition of eigenvector centrality that relies on the Perron eigenvector of a multi-homogeneous map defined in terms of the tensor describing the network. We prove that existence and uniqueness of such centrality are guaranteed under very mild assumptions on the multiplex network. 
Extensive numerical studies are proposed to test the newly introduced centrality measure and to compare it to other existing eigenvector-based centralities. 

\end{abstract}

\begin{keywords}
networks, multiplex, multilayer, eigenvector, centrality, multi-homogeneous map, Perron-Frobenius theory
\end{keywords}

\begin{AMS}
%47H07, %Op Theory - Monotone and positive operators on ordered Banach spaces or other ordered topological vector spaces
47J10, %Op Theory - Nonlinear spectral theory, nonlinear eigenvalue problems
15B48, %Lin Alg - Positive matrices and their generalizations; cones of matrices
%47H09, %Op th - Contraction-type mappings, nonexpansive mappings, $A$-proper mappings, etc.
%47H10,  %Op th - Fixed-point theorems
68M10,  %  	Network design and communication
90B10,  % 	Network models, deterministic
05C82, %   	Small world graphs, complex networks 
05C85 %  	Graph algorithms
\end{AMS}

\pagestyle{myheadings}
\thispagestyle{plain}
\markboth{{\sc F.~Tudisco, F.~Arrigo, and A.~Gautier}}{Centrality measures for multiplex networks}
% -------------------------------------------------------------------------------------------------

\section{Introduction}
%\label{sec:intro}
% -------------------------------------------------------------------------------------------------
%\begin{fa}
%Magari poi alla fine (direi che tu sei l'ultimo che legge e poi mandiamo) fatti una search per vedere se ci sono dei $st\{\}$ che vanno rimossi
%\end{fa}

One of the main goals of network analysts and data scientists is to 
identify  relevant components in a network.  
Networks are often represented as matrices; hence,  tools from matrix analysis prove to be useful in addressing various problems, such as the detection of communities and anti-communities  \cite{newman2006finding,fasino2014algebraic,traag2011narrow,fasino2016generalized,fasino2017anticomm}, the partition of the network  into clusters \cite{ng2001spectral,von2007tutorial,chiang2014prediction,mercado2016clustering}, or the identification of central nodes, edges or paths  \cite{estrada2010network,langville2012s, borgatti2006graph, GHN17, AGHN17a}. 
Over the past years several authors, focusing on this latter problem and, in particular, on the problem of  identifying the most central nodes, have worked towards the definition of different {\it centrality measures}. 
A centrality measure is a real valued function of the set of nodes that is invariant under relabelling of the nodes and can thus be used to rank  them according to their importance. 
Measures based on eigenvectors or singular vectors of suitable  matrices,  henceforth referred to as  \textit{(linear) eigenvector centralities}, are among the most popular centrality measures. Relevant examples include, for instance, the Bonacich index \cite{B87} and the PageRank score \cite{brin1998anatomy}; this latter is  world-widely known for being employed by the  Google search engine.  The Perron-Frobenius theory for matrices is the key mathematical tool behind the success of linear eigenvector centralities; this theory  provides some  easily verified sufficient assumptions on the  network  that guarantee {\it (i)} existence and uniqueness of the centrality measure, and {\it (ii)} convergence of the  method used for its computation.

Recent years have seen a growing interest towards certain higher-order data structures,  usually referred to as {\it multilayer networks}, that are used to model networks where the entities interact in different ways. A particular instance of these networks is the one of \textit{multiplex networks}. A multiplex is a  collection  of graphs that share a common vertex set and  have (possibly)  different edge sets, each of which is  modelling a different form of interaction. Multiplexes arise in many contexts, such as  social networks, where individuals' relations can be tracked through different means (phone, Internet, in person, etc.), or transportation networks, where different transport means can be used to move across geographical destinations (car, train, plane, etc.).  

Several eigenvector-based centrality measures have been defined for muliplexes  in the last few years \cite{BNL14, SRC13, DSOGA13, DSOGA15}. A first contribution of our work is in reviewing the corresponding literature and gathering together these centrality scores.   
As for the matrix representation of a mono-layer network, the higher-order nature of multiple layers naturally suggests the use of tensors with more than two modes to describe the networks. 
However, in order to exploit the Perron-Frobenius theory for matrices for describing eigenvector-based centrality measures in this framework, a popular  idea in the current literature is to work with suitable matrix eigenvector equations. This approach is a form of ``linearization'' of the higher-order structure which, as we shall further  discuss later in the paper, might lead to  a loss of information. 

Our main contribution is the introduction of a novel centrality measure based on the Perron eigenvector of a multi-homogeneous map. These maps generalize the concept of homogenous functions and their definition is recalled at the beginning of Section \ref{sec:multi-hom} below. The particular mapping we will use is 
%This map is 
not linear nor multi-linear and   allows us to simultaneously derive a ranking for both nodes and layers in the multiplex. 
Our model follows a somewhat natural extension of the Bonacich index for mono-layer networks to the multilayer case. By exploiting the Perron-Frobenius Theorem for multi-homogeneous maps \cite{GTH17} we show that the  introduced  non-linearity is needed  in order to ensure existence and uniqueness of the eigenvector and, consequently,   for the well-posedness of the  model. 
Furthermore, existence and uniqueness are guaranteed for multiplex networks whose topology satisfies very mild conditions. In particular, connectedness of the multiplex network (or of its  individual layers) is no longer required. 
This is extremely  important as real-world data sets are typically very sparse and often not connected. Thus,  in most situations, other centrality measures fail to provide a well-defined centrality score.

We subdivide the discussion as follows: in the Section \ref{sec:back}  we briefly recall some background terminology and results. In Section \ref{sec:related} we review the eigenvector-based centrality scores that we found in the literature, highlighting their potential advantages and disadvantages. Then, in Section \ref{sec:main}, we describe  our  model and state and prove our   result on existence and uniqueness of the proposed centrality measure. Moreover, we describe  a power method iteration  which is proved to   converge to the newly introduced  centrality.  We also show an explicit a priori bound for the approximation error of the corresponding iteration.
Finally, in Section \ref{sec:ne} we propose several numerical experiments on two different multiplex networks, showing  evidence of the effectiveness of the model.

% -------------------------------------------------------------------------------------------------
\section{Background}
\label{sec:back}
% -------------------------------------------------------------------------------------------------
In this section we review some definitions and notations associated with graphs and multilayer graphs 
that will be used throughout the paper. 

We will denote by $\Rn_{\geq}$ the set of non-negative vectors of length $n$ and we will write equivalently 
that $\xvec\in\Rn_\geq$ or $\xvec\geq 0$. 
The set of vectors of length $n$ with positive entries 
$\xvec>0$ will be denoted by $\Rn_>$.  We will write $\xvec \geq \yvec$ (resp.\ $\xvec >\yvec$) if $\xvec - \yvec \geq 0$ (resp.\ $\xvec - \yvec >0$).

The symbol  $I$ will denote the identity matrix and  $\bone$ will denote the vector of all ones. 
The size of these objects will be clear from the context. 
However, when needed, we will make the dimension explicit by writing $I_n$ and $\bone_n$.  
%We will sometimes adopt Matlab notation, the colon indicating that we are letting the corresponding index vary where defined. 

\subsection*{Networks and centrality measures}

A {\it graph} $G = (V_n,E)$ consists of a pair of sets: a set $V_n$ of $n$ {\it nodes} 
and a set $E\subseteq V_n\times V_n$ of connections, or {\it edges}, between them. 
Every graph can be represented by means of a non-negative  matrix $A = (A_{ij}) \in \mathbb R^{n\times n}$, called the {\it adjacency matrix} of the graph, whose $(i,j)$th entry 
is the weight of the edge connecting node $i$ to node $j$, if present, and zero otherwise. 
A graph is said to be {\it undirected} if for all $(i,j)\in E$, then $(j,i)\in E$  and the two edges have the same weight. Equivalently, $G$ is undirected if its adjacency matrix is symmetric, {\it directed} otherwise. 
A network is said to be {\it unweighted} if all its edges have the same weight, which can thus be considered to be unitary, and {\it weighted} otherwise. 
In this paper, unless otherwise specified, we will consider weighted undirected  networks.
%\st{For the sake of consistency with the existing  literature, throughout this section we will focus on the case of  unweighted graphs, unless otherwise specified; therefore,  the nonzero entries of the adjacency matrices will be assumed to be equal to $1$.  All definitions and results naturally extend to the case of non-negative matrices, i.e. weighted networks. In particular, for the sake of completeness, the whole presentation of our new model and technique in Section \ref{sec:main} will be given assuming any non-negative weight. } 

One of the main points addressed by researchers in network science is how to identify which are the ``most relevant'' nodes in a network. 
In order to quantify the importance of nodes, several {\it centrality measures} have been introduced over the years~\cite{BookE,BookEK,BookN}. 
A centrality measure is a real valued function of the set of nodes that is invariant under graph isomorphisms. 
One of the simplest centrality measure one can introduce is the {\it degree centrality}, which assigns to each node a score that 
corresponds to the number of its immediate neighbors; thus, for an unweighted network,  the degree centrality of node $i$ is the $i$th element of the vector $A\bone$.

Among the several centrality measures that are found in the literature, we want to focus here on those that 
rely on the use of eigenvectors of certain matrices associated with graphs. 
The {\it eigenvector centrality} was introduced by Bonacich in~\cite{B72a,B72b}. 
The underlying idea is that a node is more important if it is connected to nodes that are themselves important; therefore, the eigenvector centrality of a node $i$ is 
defined as
\begin{equation}\label{eq:standard_eig_centrality}
    \lambda x_i = \sum_{j=1}^n A_{ij}x_j\, ,
\end{equation}
for some $\lambda >0$.

This interpretation leads to the more formal definition of the eigenvector centrality of node $i$ as the $i$th component 
of the Perron vector of the adjacency matrix of the graph~\cite{BookE,BookHJ}. 
In order for this centrality vector to be well defined, i.e., for it to be positive and unique, the matrix $A$ has to satisfy certain properties. For example, a sufficient condition is its irreducibility, which is equivalent to assuming the strong connectedness of the underlying graph.\footnote{A graph is {\it strongly connected} if there exists a walk going from each node to any other node in the network.}

%The eigenvector centrality can also be interpreted as the 
%average amount of times a random walker visits a given node as the length of walks tend to infinite (see, e.g., \cite{BookE} and 
%references therein) .

\subsection*{Multiplex networks}
A {\it multilayer network}~\cite{KAB14} is a more general system than a graph and can be used to model situations in which different types of 
interactions occur. 
In a multilayer graph, in addition to nodes and edges, {\it layers} are also present. 
Each node now belongs to a subset of the set of layers and interactions can occur through edges that 
exist within a layer or across layers. 
In the remainder of this work, we will consider a particular type of multilayer networks, namely {\it multiplex networks}~\cite{BBC14}. 
Multiplex networks, sometimes referred to as {\it color-graphs}~\cite{BNL14,BBC14,Bi13,MRM10,NBLB13}, are multilayer networks that are {\it node-aligned} and have 
{\it diagonal couplings}. More precisely, a multilayer is said to be node-aligned when all its layers contain the same set of nodes, and 
it is said to have diagonal couplings when the nodes are  identified across the different layers and there are no connections between nodes that belong to two different layers.      
Hence, a multiplex network can be represented as a collection of graphs:
\[
\mathcal G = \{G^{(\ell)} = (V_n,E^{(\ell)}) \}_{\ell\in V_L}
\]
where $V_n = \{1,2,\ldots,n\}$ is the set of nodes, $V_L=\{1,\dots,L\}$ is the set of layers and $E^{(\ell)}\subset V_n\times V_n$ is the set of edges 
on layer $\ell$. 
For every $\ell\in V_L$, the graph $G^{(\ell)}$ is associated with a non-negative adjacency matrix $A^{(\ell)} = (A_{ij}^{(\ell)})\in \mathbb R^{n\times n}$. 
Thus, the multiplex network can be represented by 
means of a $3^{\rm rd}$-order tensor $\cA = (\cA_{ij\ell})$, called the {\it adjacency tensor}, whose entries are
\[
\cA_{ij\ell} = A_{ij}^{(\ell)} = 
\left\{
\begin{array}{ll}
w_\ell(i,j) & \text{if }(i,j)\in E^{(\ell)} \\
0 & \text{otherwise}
\end{array}
\right.
\]
where $w_\ell(i,j)$ is a positive number representing the strength of the  connection between  node  $i$  and  node $j$ in layer $\ell\in V_L$. 
In this work, we will focus on the case of undirected weighted multiplex networks, i.e., multiplexes whose layers contain undirected networks.

Fairly often, multiplexes have been represented by means of ``vectors of the adjacency 
matrices of the $L$  layers'' (see, e.g., \cite{BNL14,NBLB13}), which 
correspond to the mode-1 unfolding of the original $3^{\rm rd}$-order tensors~\cite{KB09}   
\[\cA_{(1)} = [A^{(1)}, A^{(2)},\ldots, A^{(L)}],\] 
or using $4^{\rm th}$-order tensors. 
This latter description can be found in~\cite{DDSCK13,KAB14}, where the authors 
introduced the {\it multilayer adjacency tensor} 
$\cB = (\cB_{ij\ell\kappa})\in\R^{n\times n\times L\times L}$. 
Its entries $\cB_{ij\ell\kappa}$, for $i,j\in V_n$ and $\ell,\kappa\in V_L$,  
are positive  if there is an edge going from node $i$ in layer $\ell$ to node $j$ in layer $\kappa$. 
It is readly verified that $\cB_{::\ell\ell} = A^{(\ell)}\in\Rnn$ is the adjacency matrix of the graph appearing in layer $\ell \in V_L$; these matrices are usally referred to as {\it intra-layer adjacency tensors}. 
Similarly, the matrices $\cB_{::\ell\kappa} = D^{(\ell,\kappa)}\in\Rnn$ for $\ell,\kappa \in V_L$, $\ell \neq \kappa$, usually referred to as 
the {\it inter-layer adjacency tensors}, contain information about the connections between 
any two nodes when one is lying on layer $\ell$ and the other on layer $\kappa$. 
It is immediate to verify that, in the case of multiplex networks, $D^{(\ell,\kappa)} = I$ for all $\ell,\kappa\in V_L$, $\ell\neq \kappa$.

% -------------------------------------------------------------------------------------------------
\section{Related works}
\label{sec:related}
% -------------------------------------------------------------------------------------------------
In the following we briefly overview the eigenvector-based centrality measures that
we found in the literature. 
This list is exhaustive, to the best of our knowledge. 
For the sake of precision, let us point out that in this section we will often use the term ``centrality'' with a little abuse of notation; indeed, 
in the following we will often assign to each node a {\it vector} of scores, rather than just one value. 
%\st{Moreover, we will tacitly assume, when needed, that the graphs $G^{(\ell)}$ are %connected, for $\ell = 1,2,\ldots, n$, so that 
%the Perron-Frobenius Theorem ensures uniqueness and positivity of the eigenvectors %under study. We want to stress, however, that for certain centrality measures %defined in this section, milder conditions imply the 
%same property of these eigenvectors. }

\subsubsection*{Matrix-based centrality indices} 
In order to retrieve eigenvector centrality measures in the setting of multiplex networks, 
the authors of~\cite{BNL14} 
propose to compute the eigenvector centrality $\qvec_{\ell}$ of $G^{(\ell)}$ for all $\ell\in V_L$ as defined in \eqref{eq:standard_eig_centrality} and then to use as a measure of centrality for a node $i\in V_n$ 
the $i$th row of the matrix 
$Q = [\qvec_1,\qvec_2,\ldots,\qvec_L]\in\R^{n\times L}$.  
Note that, for this measure to be well defined, each graph $G^{(\ell)}$ has to fulfill suitable assumptions, such as being strongly connected.

Similarly, the authors of~\cite{SRC13} defined matrices that, in the spirit of $Q$, encode in their rows the information about the importance of the corresponding node in each layer. 
The {\it local heterogeneous eigenvector-like centrality} of  $G^{(\ell)}$, denoted by $\qvec_{\ell}^\star$, is defined 
as the eigenvector associated to the leading eigenvalue of the matrix 
$A_\star^{(\ell)} = \sum_{\kappa=1}^Lw_{\ell\kappa}A^{(\kappa)}$, 
where $W = (w_{\ell\kappa})\in \R^{L\times L}$ is a non-negative matrix 
called the {\it influence matrix}. 
Each of its entries $w_{\ell\kappa}$ accounts for the influence of layer $\kappa$ over layer $\ell$.
Then, the {\it local heterogeneous eigenvector-like centrality matrix} of the multiplex network is the matrix 
whose $\ell$th column is the vector $\qvec_{\ell}^\star$. Note that, since  $A_\star^{(\ell)}$ is obtained through a form of weighted average of the adjacency matrices of the layers,  positivity and uniqueness of the centrality $\qvec_{\ell}^\star$ can be guaranteed without necessarily requiring strong connectedness of each $G^{(\ell)}$. In fact, depending on the influence matrix, each $A_\star^{(\ell)}$ can be irreducible even though none of the $A^{(\ell)}$ is.

Finally, the {\it global heterogeneours eigenvector-like centrality matrix} 
of the multiplex is the reshaping into a $n\times L$ matrix of the eigenvector associated with the leading eigenvalue of 
the $nL\times nL$ matrix obtained by performing the 
Khatri-Rao product  \cite{KR68} of the influence matrix $W$ and the 
mode-1 unfolding of $\cA$~\cite{SRC13}
\begin{equation}\label{eq:global-heterogeneous-matrix}
 W\ast\cA_{(1)} = 
\left[
\begin{array}{cccc}
w_{11}A^{(1)} & w_{12}A^{(2)} & \cdots & w_{1L}A^{(L)}\\
w_{21}A^{(1)} & w_{22}A^{(2)} & \cdots & w_{2L}A^{(L)} \\
\vdots & \vdots & \ddots & \vdots \\
w_{L1}A^{(1)} & w_{L2}A^{(2)} & \cdots & w_{LL}A^{(L)}
\end{array}
\right]. 
\end{equation}

All the centrality measures listed so far are described as $n\times L$ non-negative matrices. 
To retrieve a proper centrality score, i.e., a non-negative real number, one has to {\it aggregate} the values appearing in the centrality vectors according to some heuristics. 
In~\cite{BNL14} the authors propose to use weighted sums, so that the centrality of a node will be the corresponding entry of the vector $Q{\bs\omega}$, where ${\bs\omega}\in\R_>^L$ is a vector of weights. 
Different choices of ${\bs\omega}$ provide different rankings. 
If no intuition/information is available concerning the importance of layers, then  the (possibly)  most appropriate choice is to pick ${\bs\omega} =\bone$, thus deriving 
\begin{equation}
{\tt eig\_cen}(i) = (Q\bone)_i = \sum_{\ell=1}^L (\qvec_\ell)_i, \quad i=1,\dots,n\,. 
\label{eq:EC}
\end{equation}

Should information concerning the importance of the layers in the network be available, 
a more inferred choice of the weights may be performed: the more important a layer, the higher the value of the corresponding weight.

The authors of~\cite{BNL14} described yet another way to overcome the issue of having one vector in $\R^L_\geq$ to describe the centrality of each node. 
Instead of aggregating the data {\it after} having computed the importance of the nodes in each layer, they 
first aggregate the layers, building the matrix ${A_{\mathrm{agg}}}({\bs \omega}) = \sum_{\ell=1}^L\omega_{\ell}A^{(\ell)}$
 for some vector ${\bs \omega} = (\omega_{\ell}) > 0$, and
then they use the entries of the eigenvector associated with the leading eigenvalue 
of this matrix  as centrality scores for the nodes. 
As in the previous  centrality measures, 
the choice of the vector of weights influences the resulting ranking of nodes, as one would expect. 
When ${\bs\omega} = \bone$, we recover the {\it uniform eigenvector-like centrality vector}, independently introduced in \cite{SRC13}, 
which is entry-wise defined as 
\begin{equation}\label{eq:agg_eig}
 {\tt agg\_eig}(i) = u_i, \quad i=1,\dots,n\, ,
\end{equation}
where $\uvec$ is the eigenvector associated to 
the leading eigenvalue of the {\it aggregate adjacency matrix} $A_{\mathrm{agg}}(\bone) = \sum_{\ell} A^{(\ell)}$.

Before moving on to the description of the last eigenvector-based centrality measure, let us briefly comment on the relationships among the indices listed so far, and their potential drawbacks.

Firstly, the methods that require to work independently on each of the layers are arguably not truly exploiting the multilayer structure of 
the multiplex: they are rather just considering  it as  a set of graphs which do not  share  anything except for  the size of the set of nodes. 
However, a multiplex is a much richer structure, since identification between the nodes also implies, e.g., that certain connections that, say, appear in more 
than one layer, or lie in a more important layer, are stronger than others. 
This aspect is completely overlooked by measures like the one deduced from $Q$, for example, unless one has some {\it a priori} information that 
cannot be deduced from the single layers and that can thus be used to aggregate the data in a more informed way.   

Similarly, those methods that first {\it aggregate} the matrices and then compute the eigenvector centrality of the derived matrix ${A_{\mathrm{agg}}}({\bs \omega})$ 
are disregarding some information. Indeed, not all the layers might be equally important: some might be more relevant than others, and merging everything 
together leads to the loss of this aspect. 
 
 Finally, let us point out that the local/global heterogeneous eigenvector-like centrality measures require the knowledge of how each layer influences all the others. This form of a priori knowledge is often not available in practice and, in this case, there are two standard and somewhat natural choices that can be made  for the influence matrix: either $W = I$ or $W = \bone\bone^T$. 
 
The first choice  corresponds to a mutiplex in which every layer is just influencing itself, and hence each layer exists independently from 
the others. This follows easily from the fact that, if $W = I$, then $A_\star^{(\ell)} = A^{(\ell)}$ for all $\ell\in V_L$. 
Hence, with this choice of $W$, the local heterogeneous eigenvector-like centrality matrix reduces to $Q$ and the associated node centrality boils down to \eqref{eq:EC}; instead, when addressing the  global heterogeneous eigenvector-like centrality several issues may arise. For example, the matrix \eqref{eq:global-heterogeneous-matrix}  $I\ast\cA_{(1)} = {\rm diag}(A^{(1)},A^{(2)},\ldots,A^{(L)})$ is now reducible, regardless of the edge pattern of  the multiplex. This implies that the corresponding Perron eigenvector, if uniquely determined, may fail to be positive. 
More precisely, assume that the adjacency matrices of $m$ layers, say $A^{(1)}, \dots, A^{(m)}$ for simplicity, have the same spectral radius as $I\ast\cA_{(1)}$ with corresponding Perron eigenvectors $\qvec_i$ for $i =1,\ldots,m$. Then any linear combination of $\wt{\qvec}_1, \dots, \wt{\qvec}_m$ is a Perron eigenvector of $I\ast\cA_{(1)}$, with 
$$
\wt\qvec_i = 
\begin{bmatrix}
0 & \cdots & 0 & \qvec_i^T & 0 & \cdots & 0
\end{bmatrix}^T
$$
and where, for any $i = 1,2,\ldots,m$, the nonzero entries in $\wt\qvec_i$ are in the positions $(i-1)n+1, \dots, in$.

%only nonzero entries are $\wt{\qvec}_i((i-1)n+1:in) = \qvec_i$, for $i = 1,2,\ldots,m$. 

On the other hand, if $W = \bone \bone^T$, each layer  equally influences all the others in the network, itself included. 
In this framework, $ A_\star^{(\ell)} = {A_{\mathrm{agg}}}(\bone)$ for all $\ell$ and thus the local heterogeneous eigenvector-like centrality matrix 
is $\uvec\,\bone_L^T\in\R^{n\times L}$, where $\uvec$ is the uniform eigenvector-like centrality vector \eqref{eq:agg_eig}. 
Similarly, it is not difficult to prove  that $\bone_L\otimes\uvec\in\R^{nL}$ is the eigenvector associated to the dominant eigenvalue of $\bone\bone^T\ast\cA_{(1)}$, where $\otimes$ is 
the Kroneker product~\cite{BookHJ}. 
Thus, if all the layers have the same influence on all the others, the local and global heterogeneous eigenvector-like  centralities reduce to the 
uniform eigenvector-like centrality \eqref{eq:agg_eig}.

\subsubsection*{A $4^{\rm th}$-order  tensor-based centrality index}
The last eigenvector-based centrality measure that we found in the 
literature is described in~\cite{DSOGA13, DSOGA15} and relies on the use of the multilayer adjacency tensor $\cB$. 
The authors use the matrix $F = (F_{i\ell})\in\R^{n\times L}$ defined via the equations
\begin{equation}\label{eq:numero}
\sum_{j=1}^n\sum_{\kappa = 1}^L\cB_{ij\ell\kappa}F_{j\kappa}= \lambda_1F_{i\ell}, 
\end{equation}
for $i\in V_n$ and for $\ell\in V_L$.
Then, the {\it eigenvector versatility} of a node $i$ is defined as the  $i$th row of the matrix $F$. 
As before, the proposed centrality for each node is described by means of a vector in $\R^L$. 
Thus, in order to have a more compact way of describing the centrality of the nodes, the authors 
proposed to use as the eigenvector versatility of node $i$ the quantity 
\[
{\tt eig\_ver}(i) = (F\bone)_i. 
\]
\begin{rem}
{\rm As for the matrix-based centralities discussed above, one may actually want to use a more general definition for the eigenvector versatility of nodes,  where  a vector of weights ${\bs \omega}>0$ is used insted of $\bone$ to aggregate the centralities over the layers. 
This vector encodes any available information regarding the relevance of layers that is not 
deducible from the structure of the network. }
\end{rem}
\begin{rem}{\rm 
The eigenvector versatility of a node is again a vector, corresponding to a row of an $n\times L$ matrix. 
Although this approach may appear to be  equivalent  to those described in the previous subsection, there is one
main difference. Indeed, in this case we are no longer considering each layer independently. 
In fact, we are using a representation of the multiplex as a $4^{\rm th}$-order tensor, whose 
description encodes the presence of links across layers that represent the diagonal couplings.  
However, in the case of multiplexes, the inter-layer links do not exist in practice and thus this representation forces the introduction of additional artificial data. 
}
\end{rem}

Following~\cite{DSOGA13}, in order to compute $F$, we build the {\it supra-adjacency matrix} associated with the multilayer adjacency tensor $\cB$, which is $nL\times nL$ block matrix of the form 
\[
B = \left[
\begin{array}{cccc}
A^{(1)}    & D^{(1,2)} & \cdots     & D^{(1,L)}     \\[3pt]
D^{(2,1)} & A^{(2)}    & \ddots     & \vdots     \\[3pt]
\vdots & \ddots & \ddots     & D^{(L-1,L)} \\[4pt]
D^{(L,1)} & \cdots & D^{(L,L-1)} & A^{(L)}
\end{array}
\right]
\]
where the $A^{(\ell)}$, for $\ell = 1,2,\ldots,L$, are the adjacency matrices of the graphs appearing on each layer, and $D^{(\ell,\kappa)}$, for $\ell,\kappa\in V_L$,
 are the inter-layer adjacency tensors.  
%This construction does not alter the spectral properties and the eigenvector associated to the leading eigenvalue of this matrix is an $nL$ vectorization of the matrix $F$. 
If we now let $\lambda$ be the spectral radius of $B\geq 0$, then $\lambda=\lambda_1$ and the associated eigenvector (if uniquely determined) is ${\tt vec}(F)$, where {\tt vec} is the standard vectorization operator, and $\lambda_1$ and $F$ are as in \eqref{eq:numero}.

In the case of multiplexes, the matrix $B$ reads
$$
B = \left[
\begin{array}{cccc}
A^{(1)}    & I      & \cdots     & I    \\[3pt]
I      & A^{(2)}    & \ddots     & \vdots    \\[3pt]
\vdots & \ddots & \ddots     & I \\[4pt]
I      & \cdots & I          & A^{(L)} 
\end{array}
\right], $$
or, equivalently
\begin{equation}
B = {\rm diag}(A^{(1)},A^{(2)},\ldots,A^{(L)}) + (\bone_L\bone_L^T - I_L)\otimes I_n.
\label{eq:supAdj2}
\end{equation}

Before moving on to the introduction of our new eigenvector centrality, let us briefly comment on the connectivity assumptions  required by the previous models. As already mentioned, in order for {\tt eig\_cen} to be well-defined, each layer of the undirected multiplex $\mathcal G$ has to be  connected. This is clearly a very strong assumption on the topology of the multiplex. % and it is indeed the least likely to be satisfied among all the conditions that we describe here. 
Concerning the local,  global and  uniform heterogeneous eigenvector-like centralities, one needs the  strong connectedness of the merged graph $G_{\mathrm{agg}}=(V_n, \cup_\ell E^{(\ell)})$, whose set of edges is the union of the edges of the layers. The same property is required by the eigenvector versatility measure. In fact  it is easy to verify that the following property holds:
\begin{prop}\label{pro:versatility}
The matrix $B$ defined as in~\eqref{eq:supAdj2} is irreducible if and only if $G_{\mathrm{agg}}$ is strongly connected.
\end{prop}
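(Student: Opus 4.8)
The plan is to exploit the classical equivalence between irreducibility of a nonnegative matrix and strong connectedness of its associated graph \cite{BookHJ}. First I would note that, since each $A^{(\ell)}$ is symmetric and the coupling term $(\bone_L\bone_L^T-I_L)\otimes I_n$ is symmetric, the matrix $B$ of \eqref{eq:supAdj2} is symmetric; hence $B$ is irreducible if and only if the \emph{undirected} graph $\Gamma$ on vertex set $V_n\times V_L$ -- with $\{(i,\ell),(j,\kappa)\}$ an edge whenever the corresponding entry of $B$ is nonzero -- is connected. Reading off the block structure of $B$, the edges of $\Gamma$ are of two kinds: ``horizontal'' edges $\{(i,\ell),(j,\ell)\}$ inside a fixed layer $\ell$, present exactly when $(i,j)\in E^{(\ell)}$, and ``vertical'' edges $\{(i,\ell),(i,\kappa)\}$ joining the two copies of a fixed node $i$ in any two distinct layers, which are always present because the off-diagonal blocks of $B$ equal $I_n$. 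Thus $\Gamma$ is the disjoint union of the $L$ layer graphs $G^{(\ell)}$ with, for every node $i$, all $L$ copies of $i$ linked into a clique, and the whole statement reduces to: $\Gamma$ is connected if and only if $G_{\mathrm{agg}}$ is connected.

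For the ``if'' direction I would assume $G_{\mathrm{agg}}$ connected, pick vertices $(i,\ell)$ and $(j,\kappa)$ of $\Gamma$, choose a walk $i=v_0,v_1,\dots,v_p=j$ in $G_{\mathrm{agg}}$ and, for each $t$, a layer $\ell_t$ with $(v_t,v_{t+1})\in E^{(\ell_t)}$. Lifting this walk to $\Gamma$ step by step -- start at $(i,\ell)$, use a vertical edge to reach $(v_0,\ell_0)$, traverse the horizontal edge $\{(v_0,\ell_0),(v_1,\ell_0)\}$, take a vertical edge to $(v_1,\ell_1)$, traverse $\{(v_1,\ell_1),(v_2,\ell_1)\}$, and so on, ending with a vertical edge from $(j,\ell_{p-1})$ to $(j,\kappa)$ -- produces a walk between the two chosen vertices, where each vertical step is simply skipped when the two layers already coincide (and the case $i=j$ is handled by a single, or empty, vertical edge). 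Hence $\Gamma$ is connected and $B$ is irreducible.

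For the ``only if'' direction I would assume $B$ irreducible, i.e.\ $\Gamma$ connected, fix $i,j\in V_n$, take a walk in $\Gamma$ from $(i,1)$ to $(j,1)$, and project it onto $V_n$ by forgetting the layer coordinate: a horizontal step from $(v,\ell)$ to $(w,\ell)$ projects to the edge $(v,w)\in E^{(\ell)}\subseteq\bigcup_\ell E^{(\ell)}$, hence an edge of $G_{\mathrm{agg}}$, while a vertical step projects to a repeated vertex and is deleted. After removing these repetitions one obtains a walk in $G_{\mathrm{agg}}$ from $i$ to $j$, so $G_{\mathrm{agg}}$ is connected.

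The argument is essentially combinatorial bookkeeping and I do not expect a genuine obstacle; the only points requiring a little care are the degenerate cases in the lifting step, and the remark that symmetry of the layer matrices lets us argue with an undirected graph throughout. (For directed layers one would instead lift and project directed walks; since the coupling blocks $I_n$ are symmetric the inter-layer moves are still available in both directions, and the same proof yields the statement with ``connected'' replaced by ``strongly connected''.)
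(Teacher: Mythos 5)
Your argument is correct and is exactly the ``easy to verify'' reasoning the paper leaves implicit: the paper states Proposition~\ref{pro:versatility} without proof, and the intended justification is precisely the standard identification of irreducibility of $B$ with connectedness of its associated graph on $V_n\times V_L$, followed by the lift/project correspondence between walks in that graph and walks in $G_{\mathrm{agg}}$ that you carry out. The degenerate cases you flag (coinciding layers in the lifting, $i=j$) are handled adequately, so nothing is missing.
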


The eigenvector versatility and the family of heterogeneous eigenvector-like centrality measures  require weaker conditions on the topology of the multiplex than those required by {\tt eig\_cen}. 
However, as we will see in the next section, the eigenvector centrality that we propose in this paper has even weaker requirements. 

To help intuition, we display in Figure~\ref{fig:example_connectedness} three different multiplex networks with four nodes and two layers. 
The centrality measure introduced in this paper is well-defined for all three multiplexes. 
However,  since the three networks have very different connectivity properties, this is no longer the case for the other centrality measures discussed so far. %When working on the other measures, on the other hand, we observe very different behaviors, since the three networks have very different connectivity properties. 
Specifically, all the layers in the leftmost network are connected and thus all the measures are well-defined. Concerning the multiplex in the center,  \texttt{eig\_cen} is not well defined since the aggregate network is connected whilst the individual layers are not. 
Finally, in the rightmost network, the layers nor the aggregate network are connected and thus 
\texttt{eig\_cen}, all the eigenvector-like centrality measures and the eigenvector versatility are not well-defined.

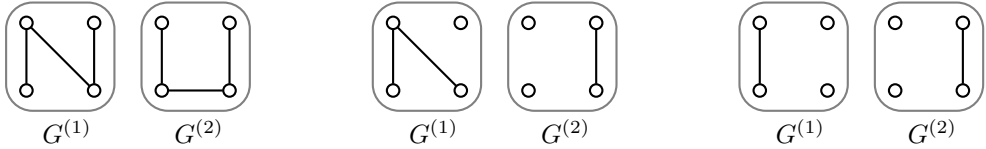
\begin{figure}
    \begin{tikzpicture}[scale=.9]
    \draw[gray,thick,rounded corners=10pt] (-.3,-.3) rectangle (1.3,1.3);
    \node at (.6,-.6) {$G^{(1)}$};
    \node[scale=.5,circle,draw=black,thick] (a1) at (0,0) {};
    \node[scale=.5,circle,draw=black,thick] (a2) at (0,1) {};
    \node[scale=.5,circle,draw=black,thick] (a3) at (1,1) {};
    \node[scale=.5,circle,draw=black,thick] (a4) at (1,0) {};
    \path[thick] (a1)edge(a2) (a2)edge(a4) (a4)edge(a3); 

    \draw[gray,thick,rounded corners=10pt] (1.7,-.3) rectangle (3.3,1.3);
    \node at (2.55,-.6) {$G^{(2)}$};
    \node[scale=.5,circle,draw=black,thick] (b1) at (2,0) {};
    \node[scale=.5,circle,draw=black,thick] (b2) at (2,1) {};
    \node[scale=.5,circle,draw=black,thick] (b3) at (3,1) {};
    \node[scale=.5,circle,draw=black,thick] (b4) at (3,0) {};
    \path[thick] (b1)edge(b2) (b1)edge(b4) (b4)edge(b3);
    \end{tikzpicture}
    \hfill
    \begin{tikzpicture}[scale=.9]
    \draw[gray,thick,rounded corners=10pt] (-.3,-.3) rectangle (1.3,1.3);
    \node at (.6,-.6) {$G^{(1)}$};
    \node[scale=.5,circle,draw=black,thick] (a1) at (0,0) {};
    \node[scale=.5,circle,draw=black,thick] (a2) at (0,1) {};
    \node[scale=.5,circle,draw=black,thick] (a3) at (1,1) {};
    \node[scale=.5,circle,draw=black,thick] (a4) at (1,0) {};
    \path[thick] (a1)edge(a2) (a2)edge(a4);

    \draw[gray,thick,rounded corners=10pt] (1.7,-.3) rectangle (3.3,1.3);
    \node at (2.55,-.6) {$G^{(2)}$};
    \node[scale=.5,circle,draw=black,thick] (b1) at (2,0) {};
    \node[scale=.5,circle,draw=black,thick] (b2) at (2,1) {};
    \node[scale=.5,circle,draw=black,thick] (b3) at (3,1) {};
    \node[scale=.5,circle,draw=black,thick] (b4) at (3,0) {};
    \path[thick] (b4)edge(b3);
    \end{tikzpicture}
    \hfill
    \begin{tikzpicture}[scale=.9]
    \draw[gray,thick,rounded corners=10pt] (-.3,-.3) rectangle (1.3,1.3);
    \node at (.6,-.6) {$G^{(1)}$};
    \node[scale=.5,circle,draw=black,thick] (a1) at (0,0) {};
    \node[scale=.5,circle,draw=black,thick] (a2) at (0,1) {};
    \node[scale=.5,circle,draw=black,thick] (a3) at (1,1) {};
    \node[scale=.5,circle,draw=black,thick] (a4) at (1,0) {};
    \path[thick] (a1)edge(a2); 

    \draw[gray,thick,rounded corners=10pt] (1.7,-.3) rectangle (3.3,1.3);
    \node at (2.55,-.6) {$G^{(2)}$};
    \node[scale=.5,circle,draw=black,thick] (b1) at (2,0) {};
    \node[scale=.5,circle,draw=black,thick] (b2) at (2,1) {};
    \node[scale=.5,circle,draw=black,thick] (b3) at (3,1) {};
    \node[scale=.5,circle,draw=black,thick] (b4) at (3,0) {};
    \path[thick] (b4)edge(b3);
    \end{tikzpicture}
    \caption{Representation of three multiplex networks with $n = 4$ and $L=2$ and different connectivity properties. The leftmost network has all connected layers (and thus the aggregate network is connected too). The multiplex in the center is such that the aggregate network is connected whilst the layers are not; in the rightmost network, the layers nor the aggregate network are connected. }\label{fig:example_connectedness}
\end{figure}

\section{Main results}
\label{sec:main}

In this section we  introduce a model that will lead to the definition of a novel centrality measure based on the  $3^{\rm rd}$-order tensor representation of  multiplex networks. The new centrality measure is defined  for  nodes as well as for  layers; moreover, to be computed, it does not require  any further  knowledge of the network apart from its topology. Therefore, the importance of the layers will 
now be computed rather than just  inferred  from some non-topological information on the network. The model is built on   the non-negative Perron eigenvector of a nonlinear multi-homogeneous order-preserving map $f$ defined in terms of the $3^{\rm rd}$-order  tensor representing the network. 
We will use the entries of this non-negative vector to assign a score to each node and each layer  in the multiplex, 
thus defining what we will  call the {\it $f$-eigenvector centrality} for  multiplexes. 

As we have already seen in Section~\ref{sec:back}, the idea behind the eigenvector centrality of nodes in a standard, monolayer complex network is that 
the importance of a node has to be proportional to the importance of its 
neighbours~\cite{B72a,B72b, B87}.  
Here we want to exploit the same idea and to generalize it to a measure of centrality for nodes and 
layers in multiplex networks with undirected layers.

There are three main differences between the results we are about to present and those found in the literature and described in
Section~\ref{sec:related}. 
First, we will work directly on the $3^{\rm rd}$-order adjacency tensor and we will return centrality scores without the need of aggregating or unfolding data.  
Second, as already pointed out, our method provides both a centrality vector for the nodes and a centrality vector for the layers in the multiplex. 
None of the methods found in the literature addressed the problem of computing  centrality scores for layers. 
Third, the proposed centrality measures are uniquely defined regardless of the irreducibly pattern of the network.

% -------------------------------------------------------------------------------------------------
\subsection{$f$-eigenvector centrality indices}\label{sec:multi-hom}
% Nodes and layers $f$-eigenvector centralities
By pursuing a somewhat natural extension of node and edge eigenvector  centralities for mono-layer networks~\cite{AB16a,B72a,B72b}, 
one  may ideally want  to define the eigenvector centrality $x_i$ of node $i$ in such a way that it 
is proportional to the importance of its neighbouring nodes and to the importance $t_\ell$ of the layers 
on which the nodes are connected. 
Similarly, one could argue that the  %in order to define the 
eigenvector centrality $t_\ell$ of  layer $\ell$, should  be proportional to the importance of the nodes connected through a link on such  layer; therefore  the quantities $x_ix_j$ can be considered as  a measure of the relevance of the link  between nodes 
$i$ and $j$, generalizing the definition of edge eigenvector centrality~\cite{AB16a}. 
This latter measure  is a global index that accounts for the total strength of the tie between nodes $i$ and $j$, without any  specification about the layers on which the interaction occurs. 

Although being very natural and being the basis for some very successful centrality scores for monolayer graphs, these linear relations are often not sufficient to describe a model that is 
mathematically well-defined. 
For this reason we slightly modify this intuitive definition by adding a nonlinear 
term. 
This will then lead to the definition of $x_i$ and $t_\ell$ as the entries of  the unique entry-wise non-negative eigenvector of a multi-homogeneous map~\cite{GTH17}. 
%This will be an overall measure that accounts for the total strength of the tie between two 
%nodes, without any  specification about the layers on which the interaction occurs. 
%Therefore, the importance of a layer $\ell$ can be defined to be proportional to the sum of the importances of the edges on such layer.
For the sake of completeness let us recall that $f=(f_1, \ldots,f_d):\R^{n_1}\times\dots\times\R^{n_d}\to \R^{n_1}\times\dots\times\R^{n_d}$ 
is said to be {\it multi-homogeneous} if there exist non-negative coefficients $\Hmx_{ij}$ such that 
\[
f_i(\xvec_1,\ldots,\lambda\xvec_j,\ldots,\xvec_d)=\lambda^{\Hmx_{ij}}f_i(\xvec_1,\ldots,\xvec_d)\, , 
\]
for any $\lambda\in \R$ and any $i,j=1,\dots,d$.  The matrix $\Hmx = (\Hmx_{ij})\in\R^{d\times d}$ is the so-called {\it homogeneity matrix} of $f$. 
It is easily verified that any square matrix is an example of multi-homogeneous map with $\Hmx=1$. 
A nonzero vector $(\xvec_1, \ldots,\xvec_d)$  is an {\it eigenvector}
for $f$ if there exist $\lambda_1, \ldots, \lambda_d\in\R$ 
such that $f_i(\xvec_1, \ldots,\xvec_d)=\lambda_i\xvec_i$, for any $i=1,2, \ldots, d$. 

Let $\cA = (\cA_{ij\ell})$ be the non-negative adjacency tensor of a possibly weighted multiplex  and let  $\alpha, \beta >0$.   
The proposed model can be formalized as the solution to the following system of nonlinear equations
\begin{equation}\left\{
\begin{array}{l}
 \sum_{j=1}^n\sum_{\ell=1}^L\cA_{ij\ell}\, x_j\, t_{\ell} = \mu\,  (x_i)^\alpha\\[5pt]
\sum_{i=1}^n\sum_{j=1}^n \cA_{ij\ell}\, x_i\, x_j = \lambda\,  (t_{\ell})^\beta\, 
\end{array} 
\right.
\label{eq:model}
\end{equation}
that has to be fulfilled for some positive scalars $\mu$ and $\lambda$. 
 Before moving on to our main result,  we observe  that if there exists $i$ such that $\cA_{ij\ell}=0$ for all $j,\ell$, then $x_i=0$. Similarly, if there exists $\ell\in V_L$ for which $\cA_{ij\ell}=0$ for all $i,j$, then $t_\ell=0$. It is thus clear that, depending on the nonzero pattern of $\cA$, any solution $(\xvec, \tvec)$ of \eqref{eq:model} is required to be zero on a  certain set of entries. This has a natural interpretation in terms of the multiplex, as for instance $\cA_{ij\ell}=0$ for all $i,j$ implies that no edges lie on layer $\ell$ and thus the importance $t_\ell$ of this layer is negligible.
Moreover, since we aim at defining centrality scores, we are interested in computing non-negative vectors. Let us thus introduce the following  set of pairs of vectors  \eqref{eq:model} % Let us introduce the set:
\[
\cP_\cA^{\, n\times L} =\left\{(\xvec,\tvec)\in\R^n_\geq\times \R^L_\geq :%
\begin{array}{c}
 x_i \sim  \textstyle{\sum_{j,\ell}\cA_{ij\ell}}, \quad \forall i = 1, \dots, n\\
 t_\ell \sim \textstyle{\sum_{i,j}\cA_{ij\ell}}, \quad \forall \ell =1,\dots, L
\end{array}\right\}\, ,
\]
where, for any two numbers $x,y\geq 0$,  we write $x \sim y$ if there exist $C>0$ such that $x =  Cy$. The set $\cP_\cA^{\, n\times L}$ contains the non-trivial solutions to \eqref{eq:model} we are interested in. To help intuition consider the example case of a mono-layer network, where
$\cA=A$ is a matrix. Then  $\cP_A^{\, n\times 1}$ is the set of pairs $(\xvec, t)$ where $\xvec$ is any non-negative vector with the same nonzero pattern as the vector of degrees $A\mathbf 1$ and $t$ is any positive number, if $A$ is not the zero matrix, whereas $t=0$ otherwise.

Similarly to the matrix case, uniqueness of the solution to \eqref{eq:model} can be ensured only up to scalar multiples. Indeed,  if $(\xvec, \tvec)$ is a solution to \eqref{eq:model}, then for any $a,b>0$, $(a\xvec, b\tvec)$ is also a solution, with possibly different positive scalars $\lambda$ and $\mu$. Therefore, in order to ensure uniqueness, we shall further ask that  $(\xvec,\tvec)\in\mathcal S_\cA^{\, n\times L}$, where
\[
\mathcal S_\cA^{\, n\times L} = \{(\xvec, \tvec)\in \cP_\cA^{\, n\times L} : \|\xvec\|_1=\|\tvec\|_1=1\}\, .
\]

%To this end, we 
Consider the mapping $f=(f_1, f_2): \Rn_{\geq}\times\R^{L}_{\geq}\to\Rn_{\geq}\times\R^{L}_{\geq}$ defined by  
\begin{equation}\label{eq:F}
 f_{1}(\xvec,\tvec)_i = \Big(\sum_{j=1}^n\sum_{\ell=1}^L\cA_{ij\ell}x_jt_{\ell}\Big)^{1/\alpha}, \qquad
 f_{2}(\xvec,\tvec)_\ell = \Big(\sum_{i=1}^n\sum_{j=1}^n\cA_{ij\ell}x_ix_j\Big)^{1/\beta}\, ,  
\end{equation}
and its normalized version $g:\mathcal{S}_\cA^{\, n\times L}\to\mathcal{S}_\cA^{\, n\times L}$ defined as
\begin{equation}\label{eq:G}
g(\xvec,\tvec) = \left(\frac{f_1(\xvec,\tvec)}{\norm{f_1(\xvec,\tvec)}_1},\frac{f_2(\xvec,\tvec)}{\norm{f_2(\xvec,\tvec)}_1}\right).
\end{equation}

The importance of this map  follows from  the following crucial observation: if $(\xvec^*,\tvec^*)$ is a solution to \eqref{eq:model}, then  $(\xvec^*,\tvec^*)$ is a fixed point of the mapping $g$, i.e., $g(\xvec^*,\tvec^*)=(\xvec^*,\tvec^*)$. Conversely, as we discuss below, any fixed point of $g$ solves \eqref{eq:model}.

\begin{theorem}\label{thm:main_undirected}
Let $\cA\in\R_{\geq}^{n\times n\times L}$ be a nonzero non-negative tensor and let $\alpha,\beta>0$ be such that $2/\beta<(\alpha-1)$. Then, the system of  nonlinear equations \eqref{eq:model} has a unique non-negative solution  $(\xvec^*,\tvec^*)\in \mathcal{S}_\cA^{\, n\times L}$.
 \end{theorem}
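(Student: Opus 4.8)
The plan is to recast the system \eqref{eq:model} as a fixed-point problem for the normalized map $g$ of \eqref{eq:G} and then apply the Perron--Frobenius theorem for multi-homogeneous order-preserving maps \cite{GTH17}. As already observed before the statement, a pair $(\xvec^*,\tvec^*)\in\mathcal{S}_\cA^{n\times L}$ solves \eqref{eq:model} (for some $\mu,\lambda>0$) if and only if it is a fixed point of $g$: if it solves \eqref{eq:model} then $f_1(\xvec^*,\tvec^*)=\mu^{1/\alpha}\xvec^*$ and $f_2(\xvec^*,\tvec^*)=\lambda^{1/\beta}\tvec^*$, and normalizing in the $1$-norm returns $(\xvec^*,\tvec^*)$; conversely, a fixed point $(\xvec^*,\tvec^*)$ of $g$ satisfies $f_1(\xvec^*,\tvec^*)=c_1\xvec^*$ and $f_2(\xvec^*,\tvec^*)=c_2\tvec^*$ with $c_1,c_2>0$, which is \eqref{eq:model} with $\mu=c_1^\alpha$ and $\lambda=c_2^\beta$. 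Hence it suffices to prove that $g$ has a unique fixed point on $\mathcal{S}_\cA^{n\times L}$.

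First I would reduce to the case in which $\cA$ has no ``inactive'' node or layer, i.e.\ $\sum_{j,\ell}\cA_{ij\ell}>0$ for every $i$ and $\sum_{i,j}\cA_{ij\ell}>0$ for every $\ell$. As noted in the text, any $(\xvec,\tvec)\in\mathcal{P}_\cA^{n\times L}$ must vanish exactly on the inactive indices, so deleting those indices and passing to the restricted tensor changes neither the problem nor its solution set, and $\mathcal{S}_\cA^{n\times L}$ is then, after zero-padding, identified with the open ``bi-simplex'' $\{(\xvec,\tvec)\in\R^n_{>}\times\R^L_{>}:\|\xvec\|_1=\|\tvec\|_1=1\}$. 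On the reduced tensor, using that $\cA_{ij\ell}=\cA_{ji\ell}$ because the layers are undirected, one checks that for every active $i$ there is a pair $(j,\ell)$ with $j$ and $\ell$ active and $\cA_{ij\ell}>0$, and symmetrically for every active $\ell$; hence $f=(f_1,f_2)$ of \eqref{eq:F} maps $\R^n_{>}\times\R^L_{>}$ into itself. It is continuous and, since $\cA\geq0$ and $1/\alpha,1/\beta>0$, order-preserving on the product cone.

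Next I would compute the homogeneity matrix of $f$: scaling $\xvec$ or $\tvec$ by $s>0$ multiplies each $f_1$-coordinate by $s^{1/\alpha}$, scaling $\xvec$ by $s$ multiplies each $f_2$-coordinate by $s^{2/\beta}$, and scaling $\tvec$ leaves $f_2$ unchanged, so
\[
\Hmx=\begin{pmatrix} 1/\alpha & 1/\alpha\\[2pt] 2/\beta & 0\end{pmatrix}.
\]
Its characteristic polynomial is $p(\mu)=\mu^2-\mu/\alpha-2/(\alpha\beta)$; since $p$ is convex, $p(0)<0$, and its larger root equals $\rho(\Hmx)$, one has $\rho(\Hmx)<1$ if and only if $p(1)=\bigl(\alpha\beta-\beta-2\bigr)/(\alpha\beta)>0$, i.e.\ if and only if $2/\beta<\alpha-1$ --- precisely the hypothesis. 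Moreover $\Hmx$ is an irreducible nonnegative matrix, so it has a positive Perron eigenvector $w=(w_1,w_2)$. I would then invoke the Perron--Frobenius theorem for multi-homogeneous maps \cite{GTH17}: for an order-preserving multi-homogeneous self-map of $\R^n_{>}\times\R^L_{>}$ with homogeneity matrix $\Hmx$ satisfying $\rho(\Hmx)<1$, the normalized map $g$ is a strict contraction, with constant $\rho(\Hmx)$, of the complete metric space $\mathcal{S}_\cA^{n\times L}$ equipped with $\max\{d_T(\xvec,\xvec')/w_1,\, d_T(\tvec,\tvec')/w_2\}$, where $d_T$ is the Thompson metric. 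By the Banach fixed-point theorem $g$ has a unique fixed point, and lifting it back by zero-padding yields the unique solution of \eqref{eq:model} in $\mathcal{S}_\cA^{n\times L}$.

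The substantive points are the reduction to the active support --- which is what makes $f$ genuinely map the interior into the interior and thereby licenses the Thompson-metric contraction --- together with checking that the hypotheses of the cited theorem are met. The spectral-radius computation is short, but it is exactly the place where the threshold $2/\beta<\alpha-1$ enters, and I expect the careful bookkeeping around supports and the precise form of the contraction estimate from \cite{GTH17} to be the main things to get right.
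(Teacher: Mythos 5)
Your proposal is correct and follows essentially the same route as the paper: reformulate \eqref{eq:model} as a fixed-point problem for the normalized map $g$ of \eqref{eq:G}, restrict to the active support so that $\mathcal P_\cA^{\, n\times L}$ becomes a product of positive cones, compute the homogeneity matrix $\Hmx$, observe that $\rho(\Hmx)<1$ exactly when $2/\beta<\alpha-1$, and conclude via a contraction argument in a weighted product cone metric with Lipschitz constant $\rho(\Hmx)$ plus the Banach fixed-point theorem. The only substantive differences are that the paper derives the contraction estimate directly (using the weighted \emph{sum} of Hilbert metrics with weights given by the Perron vector of $\Hmx^T$, chosen via Collatz--Wielandt) rather than citing \cite{GTH17} as a black box, and that the normalization step $f\mapsto g$ is an isometry only for the projective Hilbert metric --- not for the Thompson metric you name --- so the Hilbert metric is the one you should carry through on $\mathcal S_\cA^{\, n\times L}$.
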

\begin{proof} Let $f$ and $g$ be as in \eqref{eq:F} and \eqref{eq:G} respectively. The proof is in two steps: we first show that $g$ is a strict contraction on $\mathcal S_\cA^{\, n\times L}$ with respect to a metric $\delta_{\bvec}$, and then we conclude by using the Banach fixed point Theorem (see, e.g., \cite[Theorem 3.1]{BookKK}). Let us recall that, for any non-empty set of indices $\mathcal I$, the Hilbert metric on $P_{\mathcal I}=\big\{ \xvec :  x_i>0 \text{ if } i \in \mathcal I \text{ and } x_i=0 \text{ otherwise}\big\}$  is given by \begin{equation}\label{eq:Hilbertmetric}
d_{\mathcal{I}}(\xvec,\uvec) = \ln \Big( \max_{i\in \mathcal I}\frac{x_i}{u_i} \max_{j \in \mathcal I}\frac{u_j}{x_j}\Big) \qquad \forall \xvec,\uvec \in P_{\mathcal I}.
\end{equation}
In particular, the set $(\{\xvec \in P_{\mathcal I}\colon \norm{\xvec}_1=1\},d_{\mathcal I})$ is a complete metric space (see, e.g., \cite[Proposition 2.5.4]{BookLN}). Consider now the index sets  $\mathcal I\subseteq \{1,\ldots,n\}$ and $\mathcal J \subseteq \{1,\ldots, L\}$  so  that $\mathcal P_\cA^{\, n\times L} = P_{\mathcal I}\times P_{\mathcal J}$. Note that, since $\mathcal A$ is assumed to be nonzero, we have $\mathcal I, \mathcal J \neq \emptyset$. Thus, for every $\bvec = (b_1, b_2)\in\R^2_>$, we have that $(\mathcal S_\cA^{\, n\times L},\delta_{\bvec})$ is a complete metric space, with  $\delta_{\bvec}$ being the weighted product Hilbert metric  defined by
\begin{equation*}
\delta_{\bvec}\big((\xvec,\tvec),(\uvec,\vvec)\big) := b_1\, d_{\mathcal{I}}(\xvec,\uvec) +b_2\, d_{\mathcal{J}}(\tvec,\vvec).
\end{equation*}

Let us now discuss how to choose the weights $\bvec$ so that $g$ is a contraction with respect to $\delta_{\bvec}$. %\rev{For the sake of brevity, in the remainder of the proof we use vector entry-wise inequalities. So, for instance, we write $\xvec\leq \yvec$ to denote $x_i\leq y_i$ for all $i$.} 

First note that, for all $(\xvec,\tvec) \in \mathcal S_\cA^{\, n\times L}$ and $c,\tilde c>0$, it holds 
\begin{equation}\label{multihomomat}
f_j(c\,\xvec,\tilde c\,\tvec) = c^{\Hmx_{j1}}\, \tilde c^{\, \Hmx_{j2}} f_j(\xvec,\tvec), \qquad \text{for } j = 1,2 
\end{equation} 
where the homogeneity matrix $\Hmx\in\R^{2\times 2}$ is
\[
\Hmx =
\begin{bmatrix}
\alpha^{-1} & \alpha^{-1} \\ 2\beta^{-1} & 0
\end{bmatrix}.
\]

 Furthermore, note that  $f$ is order-preserving since $\mathcal A$ has non-negative entries, i.e., 
 for all $(\xvec,\tvec),(\uvec,\vvec) \in \mathcal S_\cA^{\, n\times L}$ with $(\xvec,\tvec)\leq(\uvec,\vvec)$ it holds
\begin{equation}\label{opf}
f(\xvec,\tvec)\leq f(\uvec,\vvec).
\end{equation}

Now, let $(\xvec,\tvec),(\uvec,\vvec) \in \mathcal S_\cA^{\, n\times L}$. Combining \eqref{opf} and \eqref{multihomomat} %with $c=\max_{i\in\mathcal I} x_i/u_i$ and $\tilde c = \max_{i\in\mathcal J}t_i/v_i$, 
we get
\begin{equation}\label{eq1}
f_j(\xvec,\tvec) \leq \Big(\max_{i\in\mathcal I} \frac{x_i}{u_i}\Big)^{\Hmx_{j1}}\Big(\max_{i\in\mathcal J}\frac{t_i}{v_i}\Big)^{\Hmx_{j2}}f_j(\uvec,\vvec), \qquad \text{for } j=1,2.
\end{equation}
By exchanging  $(\xvec,\tvec)$ and $(\uvec,\vvec) $ in the above equation, we further obtain
\begin{equation}\label{eq2}
f_j(\uvec,\vvec) \leq \Big(\max_{i\in\mathcal I} \frac{u_i}{x_i}\Big)^{\Hmx_{j1}}\Big(\max_{i\in\mathcal J}\frac{v_i}{t_i}\Big)^{\Hmx_{j2}}f_j(\xvec,\tvec), \qquad \text{for } j=1,2.
\end{equation}
Equations~\eqref{eq1} and~\eqref{eq2} together imply that for any $\bvec\in\R^2_>$
\begin{equation*}
\delta_{\bvec}\big(f(\xvec,\tvec),f(\uvec,\vvec)\big)\leq \max\Big\{\frac{(\Hmx^T\bvec)_1}{b_1},\frac{(\Hmx^T\bvec)_2}{b_2}\Big\}\, \delta_{\bvec}\big((\xvec,\tvec),(\uvec,\vvec)\big).
\end{equation*}
 
 The Collatz--Wielandt formula (see, f.i., \cite[Corollary 8.1.31]{BookHJ}) states that the vector $\bvec = (b_1, b_2)>0$ that minimizes the maximum in the above formula is the positive eigenvector  of $\Hmx^T$. 
In particular, if  we let   
$\rho = \frac{\sqrt{8\alpha+\beta}+\sqrt{\beta}}{2\alpha\sqrt{\beta}},$ $b_1 = \alpha \rho$, and $b_2=1$, we have $\Hmx^T \bvec = \rho \bvec$ and thus $\rho$ is a Lipschitz constant of $f$ with respect to $\delta_{\bvec}$. 
The assumption $\beta>2/(\alpha-1)$ ensures that $\rho<1$; indeed,
\begin{equation}\label{eq:rho<1}
8\alpha+\beta = \frac{2}{\alpha-1}4\alpha(\alpha-1)+\beta < \beta(4\alpha(\alpha-1)+1) = \beta (2\alpha-1)^2.
\end{equation}
Finally, note that, since $d_{\mathcal I},d_{\mathcal J}$ are projective, for every $(\xvec,\tvec),(\uvec,\vvec)\in\mathcal P_\cA^{\, n\times L}$ it holds
\begin{equation}\label{eq:projHilbert}
\delta_{\bvec}\big((c\,\xvec,\tilde c\, \tvec),(a\,\uvec,\tilde a\, \vvec)\big)=\delta_{\bvec}\big((\xvec,\tvec),(\uvec, \vvec)\big) \qquad \forall c,\tilde c,a, \tilde a >0.
\end{equation}
It thus follows that 
\begin{equation}\label{eq:contractg}
\delta_{\bvec}\big(g(\xvec,\tvec),g(\uvec,\vvec)\big)=\delta_{\bvec}\big(f(\xvec,\tvec),f(\uvec,\vvec)\big)\leq \rho\, \delta_{\bvec}\big((\xvec,\tvec),(\uvec,\vvec)\big),
\end{equation}
and hence $g$ is a strict contraction on the complete metric space $(\mathcal S_\cA^{\, n\times L},\delta_{\bvec})$. 
Thus, $g$ has a unique fixed point $(\xvec^*,\tvec^*)\in \mathcal S_\cA^{\, n\times L}$.

We conclude the proof by noting that every solution to \eqref{eq:model} is a fixed point of $g$, and thus it is unique;  conversely,   every fixed point $(\xvec,\tvec)\in \mathcal S_\cA^{\, n\times L}$ of $g$ is a solution to \eqref{eq:model} with $\mu = \norm{f_1(\xvec,\tvec)}_1$ and $\lambda = \norm{f_2(\xvec,\tvec)}_1$,  thus implying existence.
\end{proof}

Theorem~\ref{thm:main_undirected} above shows that the condition $2/\beta < \alpha - 1$ on the positive parameters $\alpha$ and $\beta$ is needed to guarantee existence and uniqueness of a solution to~\eqref{eq:model}
in $\mathcal S_\cA^{\, n\times L}$.   
Clearly, the natural choice $\alpha = \beta = 1$ does not satisfy the above condition and may give rise to an ill-posed problem, as the following simple example shows.

\begin{example}Let $n=L =2$ and consider the tensor $\mathcal A = \mathcal B/25$, where  
$$
\left\{
\begin{array}{llll} 
\mathcal B_{1,1,1} = 6 & \mathcal B_{1,2,1} = 199/7 & \mathcal B_{2,1,1} = 16/7 & \mathcal B_{2,2,1} = 11 \\ 
\mathcal B_{1,1,2} = 61/7 & \mathcal B_{1,2,2} = 6 & \mathcal B_{2,1,2} = 29 & \mathcal B_{2,2,2} = 16/7
\end{array}\right. \, .
$$
Then both $\xvec = (2,1)/3$, $\tvec = (1,2)/3$ and $\tilde \xvec = (1,3)/4$,  $\tilde \tvec = (3,1)/4$ solve \eqref{eq:model}. 
\end{example}

%The introduction  of the  parameters $\alpha$ and $\beta$ allows us to move from the linear model discussed in the above theorem to a nonlinear one. The following Theorem \ref{thm:main_undirected} shows that we can now ensure existence and uniqueness of the solution to \eqref{eq:model} for any non-negative tensors $\cA$, without any constrains on its magnitude nor its irreducibility. 
The proof of Theorem \ref{thm:main_undirected} relies on techniques proposed in \cite{GTH17}; however, due to the special structure of \eqref{eq:model}, our proof is shorter and simpler. Moreover, we have less restrictive assumptions on the entries of the tensor. The conclusion of the proof follows from the Banach fixed point Theorem. This has two main advantages: first, it ensures uniqueness of the solution and second, it naturally induces an iterative method for its computation which we discuss in more detail in Theorem \ref{cor:convergence}.

A few further  comments on Theorem~\ref{thm:main_undirected}  are in order. 
Firstly, we want to stress that we are not requiring $\cA$ to be necessarily defined as in \sref{sec:back}. 
The definition we gave is essentially the generalization of the adjacency matrix to the case of 
$3^{\rm rd}$-order tensors; however, this theorem applies to any non-negative tensor of order $3$ that one may want to use to describe the multiplex network under study.
Secondly, let us note that if the considered tensor $\cA$ is such that 
\begin{itemize}
\item[$(i)$]for all $\ell\in V_L$ there exist $i,j\in V_n$ such that $\cA_{ij\ell}>0$,  
\item[$(ii)$] for all $i\in V_n$ there exist $\ell\in V_L$ and $j\in V_n$ such that $\cA_{ij\ell}>0$,
\end{itemize}
then the unique solution to \eqref{eq:model} is entry-wise {\it positive}. 
The above requirements correspond to very mild conditions on the topology of the multiplex network. 
Indeed, $(i)$ requires that there are no empty layers, whereas $(ii)$ coincides with the requirement that all nodes must have at least one connection in at least one layer, i.e., 
the aggregate degree of every node must be positive. So the existence of  a unique positive solution is ensured by our nonlinear model \eqref{eq:model} under sensibly weaker conditions than the ``standard'' irreducibility assumption on the adjacency matrix of mono-layer networks. 

Finally, we want to stress that these non-negative vectors can indeed be used as centrality vectors, as any solution to \eqref{eq:model} does not depend on the labeling of nodes and layers.  In fact, let $\sigma:V_n\to V_n$ and $\pi:V_L\to V_L$ be  two permutations and define the tensor $\tilde \cA$ with entries $\tilde \cA_{ij\ell}=\cA_{\sigma(i)\sigma(j)\pi(\ell)}$. Then the eigenvector $(\tilde \xvec, \tilde \tvec)\in \cS_{\tilde \cA}^{\,n\times L}$ of $f$ defined in terms of $\tilde \cA$ is such that $\tilde x_i = x_{\sigma(i)}$ and $\tilde t_\ell = t_{\pi(\ell)}$, where $(\xvec, \tvec)\in \cS_\cA^{\,n\times L}$ is the solution to \eqref{eq:model} associated with $\cA$.

We are now able to give a definition of node and layer centrality for multiplex networks.

\begin{defn}
 Let $\cA\in\R^{n\times n\times L}$ be a nonzero $3^{\rm rd}$-order non-negative tensor with undirected layers and  let $\alpha,\beta >0$ be such that $2/\beta<(\alpha-1)$. Define $f$ as the multi-homogeneous function~\eqref{eq:F}.  
For any $i\in V_n$ and $\ell\in V_L$, we define the {\rm $f$-node eigenvector centrality of node $i$} as $C_f(i)=x_i$
and the {\rm $f$-layer eigenvector centrality of layer $\ell$} as  $L_f(\ell)=t_\ell$, where $(\xvec,\tvec)$ is the unique non-negative eigenvector of $f$ in $\mathcal S_\cA^{\, n\times L}$.
\label{def:eig}
\end{defn}
As discussed above, the proof structure of Theorem \ref{thm:main_undirected} naturally induces an iterative method for the computation of node and layer centrality for multiplex networks. This method is described in the following Theorem. Furthermore, partially inspired from results in \cite{gautier2016globally, gautier2017tensor, tudisco2017krylov}, we derive explicit convergence rates for our method. In particular, we describe explicit bounds on the number of iterations   required in order to obtain a desired accuracy. 
\begin{theorem}\label{cor:convergence}
Let $\cA$, $\alpha$ and $\beta$ be as in the hypotheses of  Theorem~\ref{thm:main_undirected}. Furthermore, let $(\xvec^*,\tvec^*)$ be the unique solution to \eqref{eq:model} and let $g$ be defined as in \eqref{eq:G}. Given any $(\xvec^{(0)},\tvec^{(0)})\in \R_>^n\times \R_>^L$ 
consider the sequence
\begin{equation}\label{eq:power_sequence}
 (\xvec^{(k+1)},\tvec^{(k+1)})=g(\xvec^{(k)},\tvec^{(k)}), \qquad k = 0,1,2,\ldots
\end{equation}
Then, $$\lim_{k\to \infty} (\xvec^{(k)},\tvec^{(k)})=(\xvec^*,\tvec^*).$$ 
Furthermore, if $\rho=\frac{\sqrt{8\alpha+\beta}+\sqrt{\beta}}{2\alpha\sqrt{\beta}}$ and $(\xvec^{(0)},\tvec^{(0)})=(\bone_n,\bone_L)$, then $\forall k=1,2,\ldots$
\begin{equation*}
\norm{(\xvec^{(k)},\tvec^{(k)})-(\xvec^*,\tvec^*)}_{\infty} \leq \rho^k\Big[\alpha\rho\Big(\frac{\max_{i\in\mathcal I} x_i^*}{\min_{i'\in\mathcal I} x_{i'}^*}\Big)+\Big(\frac{\max_{\ell\in\mathcal J} t_\ell^*}{\min_{\ell'\in\mathcal J} t_{\ell'}^*}\Big)\Big] 
\end{equation*}
with  $\mathcal I=\{i\colon x^{(1)}_i>0\}$ and $\mathcal J =\{\ell \colon t_{\ell}^{(1)}>0\}$. 
Therefore,  for every $\epsilon>0$ and for any $k$ such that 
\begin{equation}\label{eq:formula_k}
 k \geq \frac{\ln((1-\rho)\epsilon)-\ln(C)}{\ln(\rho)}\, ,    
\end{equation}
 we have $\norm{(\xvec^{(k)},\tvec^{(k)})-(\xvec^*,\tvec^*)}_{\infty} \leq \epsilon$ 
% \begin{equation*}\label{nbsteps}
% \norm{(\xvec^{(k)},\tvec^{(k)})-(\xvec^*,\tvec^*)}_{\infty} \leq \epsilon
% \end{equation*}
%for all $ k \geq \frac{\ln((1-\rho)\epsilon)-\ln(C)}{\ln(\rho)},$
where 
$$C=\rho\ln\bigg(\max_{i,i'\in\mathcal I} \frac{\sum_{j=1}^n\sum_{\ell=1}^L\mathcal A_{ij\ell}}{\sum_{j=1}^n\sum_{\ell=1}^L\mathcal A_{i'j\ell}}\bigg)+\frac{1}{\beta}\ln\bigg(\max_{\ell,\ell'\in\mathcal J} \frac{\sum_{i=1}^n\sum_{j=1}^n\mathcal A_{ij\ell}}{\sum_{i=1}^n\sum_{j=1}^n\mathcal A_{ij\ell'}}\bigg)\, .$$
\end{theorem}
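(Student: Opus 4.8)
The plan is to combine the contraction estimate already established in the proof of Theorem~\ref{thm:main_undirected}---that $g$ maps $(\mathcal S_\cA^{\,n\times L},\delta_{\bvec})$ into itself and is $\rho$-Lipschitz for the weighted Hilbert metric $\delta_{\bvec}$ with $\rho<1$, $b_1=\alpha\rho$, $b_2=1$---with the Banach fixed point theorem, and then to convert the resulting bounds, which live naturally in $\delta_{\bvec}$, into $\infty$-norm bounds. For the convergence statement I would first observe that, although $(\xvec^{(0)},\tvec^{(0)})\in\R_>^n\times\R_>^L$ need not lie in $\mathcal S_\cA^{\,n\times L}$, a single application of $g$ does: since $\xvec^{(0)},\tvec^{(0)}>0$, the quantity $f_1(\xvec^{(0)},\tvec^{(0)})_i=\big(\sum_{j,\ell}\cA_{ij\ell}x^{(0)}_jt^{(0)}_\ell\big)^{1/\alpha}$ is positive exactly for $i\in\mathcal I$, and similarly $f_2(\xvec^{(0)},\tvec^{(0)})_\ell>0$ exactly for $\ell\in\mathcal J$, so $(\xvec^{(1)},\tvec^{(1)})\in\mathcal S_\cA^{\,n\times L}$. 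Completeness of $(\mathcal S_\cA^{\,n\times L},\delta_{\bvec})$ and the contraction property \eqref{eq:contractg} then give, by the Banach theorem, that $(\xvec^{(k)},\tvec^{(k)})$ converges to the unique fixed point of $g$, which by the closing argument of the proof of Theorem~\ref{thm:main_undirected} is exactly the unique solution $(\xvec^*,\tvec^*)$ of \eqref{eq:model}.

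The key auxiliary estimate is a comparison lemma: for every $\xvec,\uvec\in P_{\mathcal I}$ with $\norm{\xvec}_1=\norm{\uvec}_1=1$ one has $\norm{\xvec-\uvec}_\infty\le d_{\mathcal I}(\xvec,\uvec)$. To prove it, write $\theta_i=x_i/u_i$ and set $r=\min_{i\in\mathcal I}\theta_i$, $M=\max_{i\in\mathcal I}\theta_i$; from $\sum_{i\in\mathcal I}u_i(\theta_i-1)=0$ one gets $r\le1\le M$. When $x_i\ge u_i$, the identity $\sum_j u_j\theta_j=1$ gives $u_i\le(1-r)/(\theta_i-r)$, hence $|x_i-u_i|=u_i(\theta_i-1)\le(1-r)(\theta_i-1)/(\theta_i-r)\le(1-r)(M-1)/(M-r)\le1-r$ by monotonicity in $\theta_i$; when $x_i<u_i$, directly $|x_i-u_i|=u_i(1-\theta_i)\le1-r$ because $u_i\le1$. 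In either case $|x_i-u_i|\le1-r\le-\ln r\le\ln(M/r)=d_{\mathcal I}(\xvec,\uvec)$. Since $b_1\ge1$ and $b_2=1$, this gives for $k\ge1$ the inequality $\norm{(\xvec^{(k)},\tvec^{(k)})-(\xvec^*,\tvec^*)}_\infty\le d_{\mathcal I}(\xvec^{(k)},\xvec^*)+d_{\mathcal J}(\tvec^{(k)},\tvec^*)\le\delta_{\bvec}\big((\xvec^{(k)},\tvec^{(k)}),(\xvec^*,\tvec^*)\big)$.

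It remains to bound $\delta_{\bvec}$. Because $\cA_{ij\ell}=0$ whenever $j\notin\mathcal I$ (here using that the layers are undirected) or $\ell\notin\mathcal J$, the map $g$ evaluated at $(\bone_n,\bone_L)$ depends only on the $\mathcal I\times\mathcal J$ coordinates, so the entire orbit $(\bone_n,\bone_L),(\xvec^{(1)},\tvec^{(1)}),(\xvec^{(2)},\tvec^{(2)}),\dots$ may be regarded as lying in $P_{\mathcal I}\times P_{\mathcal J}$, where \eqref{eq:contractg} is valid; iterating it $k$ times yields $\delta_{\bvec}\big((\xvec^{(k)},\tvec^{(k)}),(\xvec^*,\tvec^*)\big)\le\rho^k\,\delta_{\bvec}\big((\bone_n,\bone_L),(\xvec^*,\tvec^*)\big)$. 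Evaluating \eqref{eq:Hilbertmetric} gives $d_{\mathcal I}(\bone_n,\xvec^*)=\ln\big(\max_{i\in\mathcal I}x_i^*/\min_{i'\in\mathcal I}x_{i'}^*\big)$, and analogously for $\tvec^*$; with $b_1=\alpha\rho$, $b_2=1$ and the bound $\ln t\le t$ this produces the first displayed inequality. For the iteration count I would instead use the a priori Banach estimate $\delta_{\bvec}\big((\xvec^{(k)},\tvec^{(k)}),(\xvec^*,\tvec^*)\big)\le\frac{\rho^k}{1-\rho}\,\delta_{\bvec}\big((\xvec^{(0)},\tvec^{(0)}),(\xvec^{(1)},\tvec^{(1)})\big)$, and compute the right-hand side explicitly: since $(\xvec^{(1)},\tvec^{(1)})=g(\bone_n,\bone_L)$ satisfies $x_i^{(1)}/x_{i'}^{(1)}=\big(\sum_{j,\ell}\cA_{ij\ell}/\sum_{j,\ell}\cA_{i'j\ell}\big)^{1/\alpha}$ and $t_\ell^{(1)}/t_{\ell'}^{(1)}=\big(\sum_{i,j}\cA_{ij\ell}/\sum_{i,j}\cA_{ij\ell'}\big)^{1/\beta}$, formula \eqref{eq:Hilbertmetric} gives $\delta_{\bvec}\big((\xvec^{(0)},\tvec^{(0)}),(\xvec^{(1)},\tvec^{(1)})\big)=C$; combining with the comparison lemma and solving $\frac{\rho^k}{1-\rho}C\le\epsilon$ for $k$---where dividing by $\ln\rho<0$ reverses the inequality---gives \eqref{eq:formula_k}. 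I expect the main obstacle to be the comparison lemma and the care needed to iterate the contraction legitimately from the non-normalized starting point $(\bone_n,\bone_L)$; the rest of the argument is bookkeeping.
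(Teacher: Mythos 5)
Your proposal is correct and follows the same architecture as the paper's proof: one application of $g$ lands the orbit in $\mathcal S_\cA^{\, n\times L}$, the Banach fixed point theorem in the weighted Hilbert metric $\delta_{\bvec}$ (with $b_1=\alpha\rho\geq 1$, $b_2=1$) yields both the $\rho^k$ bound and the a priori bound, the initial distances $\delta_{\bvec}\big((\bone_n,\bone_L),(\xvec^*,\tvec^*)\big)$ and $\delta_{\bvec}\big((\bone_n,\bone_L),g(\bone_n,\bone_L)\big)=C$ are computed explicitly, and a comparison lemma converts $\delta_{\bvec}$ into the $\infty$-norm. The one place where you genuinely diverge is the proof of that comparison lemma, $\norm{\xvec-\uvec}_\infty\leq d_{\mathcal I}(\xvec,\uvec)$ for $\ell^1$-normalized vectors in $P_{\mathcal I}$: the paper passes to logarithmic coordinates $\bar x_i=\ln x_i$, shows $d_{\mathcal I}(\xvec,\uvec)\geq\norm{\bar\xvec-\bar\uvec}_\infty$, and then uses $|e^a-e^b|\leq|a-b|\max\{e^a,e^b\}$ together with $\max_i\max\{x_i,u_i\}\leq 1$; you instead exploit the normalization constraint $\sum_{i\in\mathcal I}u_i\theta_i=1$ directly to obtain $|x_i-u_i|\leq 1-r\leq-\ln r\leq d_{\mathcal I}(\xvec,\uvec)$. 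Both arguments are elementary and of comparable length, so this buys no generality, only a self-contained alternative. Two further points where you are in fact more careful than the printed proof: you observe that the displayed bound involving $\max_{i}x_i^*/\min_{i'}x_{i'}^*$ follows from the Hilbert-metric expression only via $\ln s\leq s$ (the paper labels it an identity, which it is not), and you justify iterating the contraction from the non-normalized, possibly over-supported starting point $(\bone_n,\bone_L)$ by noting that symmetry of the layers forces $\cA_{ij\ell}=0$ whenever $j\notin\mathcal I$ or $\ell\notin\mathcal J$, so that $g$ only sees the $\mathcal I\times\mathcal J$ coordinates; the paper silently asserts that the normalized starting point lies in $\mathcal S_\cA^{\, n\times L}$, which is literally true only when $\mathcal I$ and $\mathcal J$ are full.
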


\begin{proof}
Let $\delta_{\bvec}$ be the weighted Hilbert metric defined in the proof of Theorem~\ref{thm:main_undirected} with $b_1 = \alpha \rho$ and $b_2=1$. Note that $\rho< 1$ by \eqref{eq:rho<1}. 

Let $(\tilde \xvec^{(0)},\tilde \tvec^{(0)})=(\xvec^{(0)}/\norm{\xvec^{(0)}}_1,\tvec^{(0)}/\norm{\tvec^{(0)}}_1)\in \mathcal S_\cA^{\, n\times L}$. Then we have $g(\tilde \xvec^{(0)},\tilde \tvec^{(0)})\!\!=g(\xvec^{(0)},\tvec^{(0)})$ and, by \eqref{eq:projHilbert}, 
\begin{equation*}\delta_{\bvec}\big((\tilde \xvec^{(0)},\tilde \tvec^{(0)}),(\xvec,\tvec)\big) = \delta_{\bvec}\big((\xvec^{(0)}, \tvec^{(0)}),(\xvec,\tvec)\big) \qquad \forall (\xvec,\tvec)\in \mathcal P_\cA^{\, n\times L}. 
\end{equation*}
Therefore, using \eqref{eq:contractg} together with the Banach fixed point Theorem, we obtain
\begin{equation}\label{eq:bound1}
\delta_{\bvec}\big((\xvec^{(k)},\tvec^{(k)}),(\xvec^*,\tvec^*)\big) \leq \rho^k \delta_{\bvec}\big((\xvec^{(0)},\tvec^{(0)}),(\xvec^*,\tvec^*)\big) \qquad \forall k=1,2,\ldots
\end{equation}
and
\begin{equation}\label{eq:bound}
\delta_{\bvec}\big((\xvec^{(k)},\tvec^{(k)}),(\xvec^*,\tvec^*)\big) \leq \frac{\rho^k}{1-\rho} \delta_{\bvec}\big((\xvec^{(1)},\tvec^{(1)}),(\xvec^{(0)},\tvec^{(0)})\big) \qquad \forall k=1,2,\ldots
\end{equation}
We now use these inequalities to prove the convergence rates.

For a non-negative vector $\xvec\in\R^{m}_{\geq}\setminus\{{\bf 0}\}$, let $\mathcal I_{\xvec}=\{i\colon x_i>0\}$ and define the entries of $\bar \xvec = (\bar x_i)\in\R^m$ as $\bar x_i = \ln(x_i)$ if $i\in\mathcal I_{\xvec}$ and $\bar x_i = 0$ otherwise. 
Then, for every $\xvec,\uvec\in\R^m_{\geq}\setminus\{{\bf 0}\}$ with $\mathcal I_{\xvec}=\mathcal I_{\uvec}$ and $\norm{\xvec}_1=\norm{\uvec}_1=1$, we have $\max_{i\in\mathcal I_{\xvec}} x_i/v_i\geq 1$ and $\max_{i\in\mathcal I_{\xvec}} v_i/x_i\geq 1$ so that
\begin{equation*}
d_{\mathcal I_{\xvec}}(\xvec,\uvec)\geq\ln\left(\max\left\{\max_{i\in\mathcal I_{\xvec}} \frac {x_i}{u_i}, \max_{i\in\mathcal I_{\xvec}} \frac{u_i}{x_i}\right\}\right)=\ln\left(\max_{i\in\mathcal I_{\xvec}} e^{|\ln(x_i)-\ln(u_i)|}\right)=\|\bar\xvec-\bar \uvec\|_\infty\,
\end{equation*}
where $d_{\mathcal I_{\xvec}}$ is defined as in \eqref{eq:Hilbertmetric}. 
Since $|e^a-e^b|\leq |a-b|\max\{e^a,e^b\}$ for every $a,b>0$, we deduce that
\begin{equation*} \|\bar \xvec-\bar\uvec\|_\infty \geq \|\xvec-\uvec\|_\infty \left(\max_i(\max\{x_i,u_i\})\right)^{-1}\geq  \|\xvec-\uvec\|_\infty\, .\end{equation*}
Now, $(\xvec^{(k)},\tvec^{(k)})\in \mathcal S_\cA^{\, n\times L}$ for every $k=1,2,\ldots$, and thus, 
as $2/\beta<\alpha-1$, we get
\begin{equation*}
\alpha\,\rho = \frac{\sqrt{8\alpha+\beta}+\sqrt{\beta}}{2\sqrt{\beta}} > \frac{\sqrt{\tfrac{16}{\beta}+8+\beta}}{2\sqrt{\beta}}+\frac{1}{2}= \frac{1}{2}\Big(1+\sqrt\frac{{\tfrac{16}{\beta}+8+\beta}}{{\beta}}\Big)>1.
\end{equation*}
It follows that for every $k=1,2,\ldots$
\begin{align*} %\label{eq:boundLHS}
 \norm{(\xvec^{(k)},\tvec^{(k)})-(\xvec^*,\tvec^*)}_{\infty} &\leq \alpha\,\rho\,\norm{\xvec^{(k)}-\xvec^*}_{\infty}+\norm{\tvec^{(k)}-\tvec^*}_{\infty} \\
                                                             &\leq \delta_{\bvec}\big((\xvec^{(k)},\tvec^{(k)}),(\xvec^*,\tvec^*)\big)
\end{align*}
This, together with the identity
\begin{equation*}
\delta_{\bvec}\big((\xvec^{(0)},\tvec^{(0)}),(\xvec^*,\tvec^*)\big) = \Big[\alpha\rho\Big(\frac{\max_{i\in\mathcal I} x_i^*}{\min_{i'\in\mathcal I} x_{i'}^*}\Big)+\Big(\frac{\max_{\ell\in\mathcal J} t_\ell^*}{\min_{\ell'\in\mathcal I} t_{\ell'}^*}\Big)\Big],
\end{equation*}
proves our first convergence rate. 
As for equation \eqref{eq:bound}, note that 
\begin{equation*}
d_{\mathcal I}(f_1(\bone_n,\bone_L),\bone_n)=\frac{1}{\alpha}\,\ln\bigg(\max_{i,i'\in\mathcal I} \frac{\sum_{j=1}^n\sum_{\ell=1}^L\mathcal A_{ij\ell}}{\sum_{j=1}^n\sum_{\ell=1}^L\mathcal A_{i'j\ell}}\bigg)
\end{equation*}
and
\begin{equation*}
d_{\mathcal J}(f_2(\bone_n,\bone_L),\bone_L)=\frac{1}{\beta}\,\ln\bigg(\max_{\ell,\ell'\in\mathcal J} \frac{\sum_{i=1}^n\sum_{j=1}^n\mathcal A_{ij\ell}}{\sum_{i=1}^n\sum_{j=1}^n\mathcal A_{ij\ell'}}\bigg),
\end{equation*}
so that $\delta_{\bvec}\big((\xvec^{(1)},\tvec^{(1)}),(\xvec^{(0)},\tvec^{(0)})\big) \leq C$ with
\begin{equation*}
C= \rho\ln\bigg(\max_{i,i'\in\mathcal I} \frac{\sum_{j=1}^n\sum_{\ell=1}^L\mathcal A_{ij\ell}}{\sum_{j=1}^n\sum_{\ell=1}^L\mathcal A_{i'j\ell}}\bigg)+\frac{1}{\beta}\ln\bigg(\max_{\ell,\ell'\in\mathcal J} \frac{\sum_{i=1}^n\sum_{j=1}^n\mathcal A_{ij\ell}}{\sum_{i=1}^n\sum_{j=1}^n\mathcal A_{ij\ell'}}\bigg).
\end{equation*}
Solving $\frac{\rho^k}{1-\rho}C\leq \epsilon$ for $k$, we finally obtain
\begin{equation*}
 \norm{(\xvec^{(k)},\tvec^{(k)})-(\xvec^*,\tvec^*)}_{\infty} \leq \epsilon \qquad \forall k \geq \frac{\ln\big((1-\rho)\epsilon\big)-\ln(C)}{\ln(\rho)},
\end{equation*} 
which concludes the proof.
\end{proof}
\begin{rem}\label{rem:computation}
The relevance of Theorem \ref{cor:convergence} is twofold: it provides a convergence result for the power sequence \eqref{eq:power_sequence}  and an explicit bound on the number of iterations $k$ required to achieve a desired approximation accuracy.   In particular equation \eqref{eq:formula_k} implies that the higher the values of  $\alpha$ and $\beta$,  the smaller $k$. %provides an upper-bound in terms of $\alpha$ and $\beta$ on the number of iterations required by the method to achieve a desired approximation accuracy. Thus tuning the parameters $\alpha$ and $\beta$ allows to reduce the overall complexity of the method.
\end{rem}

\subsubsection*{An explanatory example} 
Before concluding this section, we want to show with a small example that there are situations where the eigenvector versatility is not well defined, whilst the $f$-eigenvector centrality is. 
Let us consider a multiplex $\mathcal G$ with four nodes $V_n = \{1,2,3,4\}$ and two layers $V_L = \{1,2\}$ and let us suppose that there is an undirected edge connecting nodes $1$ and $2$ 
on layer $\ell = 1$ and one undirected edge between nodes $3$ and $4$ on layer $\ell = 2$. A drawing of such network is shown on the right of Figure \ref{fig:example_connectedness}. 
It is easily understood that all the nodes are equally important, since they all play the same role in the multiplex.
Let $J_k\in\R^{2k\times 2k}$ and $P_k \in \R^{4k\times 4k}$ be the matrices 
\[J_k = \left[\begin{array}{cc}  & I_k \\ I_k & \end{array}\right], \quad  P_k = \begin{bmatrix} 
I_k & &  \\
 & J_k & \\
    & &  I_k
\end{bmatrix}.\]
The adjacency matrices of this multiplex network $\mathcal G$ thus are % $\cA = \{A^{(1)},A^{(2)}\}$ are thus:
\[
A^{(1)} = \left[\begin{array}{cc} J_2 & 0 \\ 0 & 0 \end{array}\right]\quad  \text{ and } \quad 
A^{(2)} = \left[\begin{array}{cc} 0 & 0 \\  0 & J_2 \end{array}\right]
\]
where $0$ denotes here the $2\times 2$ matrix of all zeros. 
Let us now see what happens when we compute the eigenvector versatility of these nodes. 
Let $B = {\rm diag}(A^{(1)},A^{(2)}) + (\bone_2\bone_2^T - I_2)\otimes I_4$. Since $G_{\mathrm{agg}}$ is clearly not connected, $B$ is reducible by Proposition \ref{pro:versatility} and the eigenvector versatility is not unique. Indeed, we have that 
%this matrix is similar to 
\[
P_2 B P_2 = B' = \left[
\begin{array}{cc}
B_1' & \\
%\hline
 & J_4 B_1'J_4
\end{array}\right], \quad \text{where} \quad B_1' = \begin{bmatrix}J_2 & I_2  \\
I_2 &  0 \end{bmatrix}\, .
\]
Thus the matrices $B$ and $B'$ are similar through $P_2$ and their spectrum is fully determined by the spectra of the two $4\times 4$ blocks appearing on the 
main diagonal of $B'$. 
Moreover, the two matrices on the diagonal of $B'$ are themselves similar, and thus, clearly, the spectrum of $B$ will contain the same elements  as 
that of $B'_1$, each counted with multiplicity two. 
Consequently, the spectral radius of $B$ will be an eigenvalue 
with algebraic multiplicity two. 
Moreover, since $B'_1$ is the adjacency matrix of a path with four nodes, we know that its spectrum is
\[
\sigma(B'_1) = \{2\cos(\pi j/5)\}_{j=1}^4 \approx \{\pm 0.618, \pm 1.618\}\, .
\]
Thus, the matrix $B$ has four eigenvalues on the spectral circle. From the viewpoint of computing eigenvalues and eigenvectors, 
this is clearly an issue since there are two distinct eigenvectors associated with $\rho(B)=\rho(B_1')\approx 1.618$. 
Since $B' = P_2BP_2 = B_1' \oplus J_4 B_1'J_4$, the eigenspace of $B$ corresponding to its spectral radius is generated by %${\tt vec}(F_1)=(0,0,0,0, \xvec_1)$ and ${\tt vec}(F_2)=(\xvec_1, 0,0,0,0)$. Precisely, we have
%\begin{fa}check! $vec(F_2)$ below is different from above\end{fa}
\[
{\tt vec}(F_1)  = P_2\begin{bmatrix}\xvec_1 \\ \mathbf 0 \end{bmatrix}, \qquad {\tt vec}(F_2)  = P_2 \begin{bmatrix}\mathbf 0 \\ J_4 \xvec_1  \end{bmatrix}\, ,
\]
where $\mathbf 0$ is here the zero vector of length $4$ and 
\[
%\xvec_{i} = (x_i(j)) = \left(\sqrt{{2}/{5}} \sin({ij\pi}/{5})\right)\quad j = 1,\dots, n \text{ and } i = 1,\dots, 4\,  .
\xvec_1 = \sqrt{\frac 2 5}\begin{bmatrix}
\sin(\pi/5)\\
\sin(4\pi/5) \\
\sin(2j\pi/5) \\
\sin(3j\pi/5)
\end{bmatrix} \approx \begin{bmatrix} 0.3717 \\
 0.3717 \\
 0.6015 \\
 0.6015
\end{bmatrix}
\]
is the normalized Perron eigenvector of $B_1'$. 
This can lead to very different results in terms of the eigenvector versatility of the nodes. 
Indeed, it is equally probable to get:
\[{\tt eig\_ver}_1 = F_1\bone = \xvec_1 \quad \text{or} \quad {\tt eig\_ver}_2 = F_2\bone = J_4 \xvec_1 \]

\begin{comment}
\[
{\tt vec}(F_1)  = [  0, 0, 0, 0, 0.3717, 0.3717, 0.6015,  0.6015]^T
\]
and 
\[
{\tt vec}(F_2) = [ 0.6015,  0.6015,0.3717, 0.3717, 0, 0, 0, 0]^T.
\]
This can lead to very different results in terms of the eigenvector versatility of the nodes. 
Indeed, it is equally probable to get:
\[{\rm eig\_ver}_1 = F_1\bone_2 = \left[
\begin{array}{cc}
0 & 0.3717 \\
0 & 0.3717 \\
0 & 0.6015 \\
0 & 0.6015
\end{array}
\right]\left[
\begin{array}{c}
1 \\ 1 
\end{array}
\right] =\left[
\begin{array}{c}
0.3717 \\
0.3717 \\
0.6015 \\
0.6015
\end{array}
\right]\]
or
\[{\rm eig\_ver}_2 = F_2\bone_2 = \left[
\begin{array}{cc}
 0.6015 & 0 \\
 0.6015 & 0 \\
 0.3717 & 0 \\
 0.3717 & 0 
\end{array}
\right]\left[
\begin{array}{c}
1 \\ 1 
\end{array}
\right] =\left[
\begin{array}{c}
0.6015 \\
0.6015 \\
0.3717 \\
0.3717 
\end{array}
\right]\]
\end{comment}
Thus,  the eigenvector versatility may not only  be not well-defined, but also it might lead to centrality scores that 
are not meaningful, as in this case. 
Indeed, ${\tt eig\_ver}_1$ gives more importance to nodes $3$ and $4$, while ${\tt eig\_ver}_2$ 
identifies as more meaningful nodes $1$ and $2$, contradicting our intuition that all the nodes should have the same importance.

On the other hand, if $\cA$ is the adjacency tensor $\cA_{ij\ell}=A_{ij}^{(\ell)}$,  an easy computation reveals that $\sum_{ij}\cA_{ij\ell}=2$ and $\sum_{j\ell}\cA_{ij\ell}=1$, for any $\ell=1,2$ and any $i=1,\dots, 4$. Thus, independently of the choice of $\alpha$ and $\beta$, the pair  $(\mathbf 1, \mathbf 1)\in\R^4\times \R^2$ is a positive solution to \eqref{eq:model}. As $\cP_\cA^{\, 4\times 2}=\R^4_>\times \R^2_>$, Theorem \ref{thm:main_undirected} implies that  $(\mathbf 1, \mathbf 1)$ is the unique positive solution to \eqref{eq:model}  when $2/\beta<\alpha-1$  (up to positive multiples); therefore, the $f$-node eigenvector  
centrality of every node is $C_f(i)=1/4$, as one would intuitively expect. Moreover, the $f$-layer eigenvector centrality is $L_f(\ell)=1/2$, for any layer $\ell=1,2$ thus confirming the intuition that all the layers are equally important, since they are, as the nodes, interchangable.

% -------------------------------------------------------------------------------------------------
\section{Numerical experiments}
\label{sec:ne}
% -------------------------------------------------------------------------------------------------

In this section we will describe the results obtained when the $f$-eigenvector centrality is used to rank the nodes in a multiplex network. 
All experiments were performed using MATLAB Version
9.1.0.441655 (R2016b) on an HP EliteDesk running Scientific Linux 7.3 (Nitrogen), a 3.2 GHz Intel Core i7 processor, and 4 GB of RAM.
The experiments can be reproduced using the code available at {\tt https://github.com/ftudisco/node\_layer\_eigenvector\_centrality}.

We compare the results obtained when ranking the nodes according to the $f$-eigenvector centrality as in \dref{def:eig} 
with those obtained when using the eigenvector versatility {\tt eig\_ver}, the centrality measure {\tt eig\_cen} described in~\eqref{eq:EC}, 
the uniform eigenvector-like centrality {\tt agg\_eig},  and the aggregate 
degree centrality {\tt agg\_deg}. 
Recall that the aggregate degree of a node $i$ is the total number of links adjacent to it in the multiplex, i.e., $(A_{\rm agg}(\bone)\bone)_i$. 
Since we do not have any information regarding the importance of layers, nor the influence they have on each other, 
we are not testing the performance of the local and global heterogeneous eigenvector-like 
centrality measures. Indeed, as we have seen in Section~\ref{sec:related}, if no additional information is available, 
these reduce to either {\tt agg\_eig}, if $W = \bone\bone^T$, 
or {\tt eig\_cen}, if $W = I$. 

We want to stress once again that our iteration is able to return two centrality vectors: 
one for the importance of nodes and one for the importance of layers, while the other measures can only 
be computed for the nodes. 

Our model requires the selection of two positive 
scalars $\alpha$ and $\beta$ such that $2/\beta<(\alpha -1)$. % $\alpha>2$ and $\alpha\beta >\alpha +\beta$. 
Concerning the parameter $\beta$, a good choice could 
be to select it equal to $2$, since in the definition of $\tvec$ 
the importance of the layers is related to a quadratic polynomial of the node centralities. 
Therefore, in all our tests, we will use $\beta = 2$. 
Since there is no inferred way of selecting $\alpha>2$, we select it to be $\alpha = 2.1$ and 
we will study, later in this section, how 
the rankings change  when  we let its value vary.  
For the sake of completeness, we remark that we performed a similar experiment where we kept the value of $\alpha = 2 $ fixed while letting that of $\beta$ vary. We decided not to include the results here since 
they aligned with those obtained for fixed $\beta$ and varying~$\alpha$.

The sequence $\{(\xvec^{(k)},\tvec^{(k)})\}_k$ defined in Theorem~\ref{thm:main_undirected} and  used for the computation of the $f$-eigenvector centrality 
is an adaptation of the power method to handle $3^{\rm rd}$-order tensors. 
A normalization step %in the $1$-norm 
is thus performed in order to avoid overflow and underflow in the computations. 
The stopping criterion for our algorithm will be the relative difference between two subsequent iterations. 
The algorithm stops when both the $f$-eigenvector centrality vectors have reached the desired level of accuracy. 
In more detail, our algorithm stops when 
$$
\max\left\{ \frac{\|\xvec^{(k)} - \xvec^{(k-1)} \|}{\|\xvec^{(k)}\|},   \frac{\|\tvec^{(k)} - \tvec^{(k-1)} \|}{\|\tvec^{(k)}\|} \right\} < {\rm tol}
$$

%$$\left\{\begin{array}{l}
%\frac{\|\xvec^{(k)} - \xvec^{(k-1)} \|_2}{\|\xvec^{(k)}\|_2} < {\rm %tol} \\[5pt]
%\frac{\|\tvec^{(k)} - \tvec^{(k-1)} \|_2}{\|\tvec^{(k)}\|_2} < {\rm %tol} 
%\end{array}\right. \, .
%$$

%\rev{Note that the choice of $2$-norm in the above stopping criterion is not relevant and any other norm could be used}. 
In the tests we used ${\rm tol} = 10^{-6}$ and the Euclidean norm $\|\xvec\|_2 = \sqrt{\sum_i x_i^2}$, since numerical tests not displayed here showed that 
smaller values of the tolerance and different vector norms returned the same rankings. 
Recall that we are not interested in the actual value of the centrality of each node and layer, but rather 
in determining the ranking position of each one of them.

% --------------------------------------------------------------------------------------------------

We tested our technique on two multiplex networks with undirected and unweighted layers, which are studied separately in the following two subsections. 

\subsubsection*{European airlines  trasportation network}
The first dataset is the EU-air transportation multiplex~\cite{CGG13,DataDD} 
(\textsf{EUair}), which consists of 37
 layers, each of which represents a different European airline. 
The 450 nodes in each layer represent Europen airports and the links represent flights among them (see~\cite{DataDD} for a complete description of the datasets.) 
This multiplex does not have empty layers and there are 33 nodes with zero aggregate degree, which are correctly assigned a zero score by all the centrality measures. 
All the layers as well as the matrix $A_{\rm agg}(\bone)$ are disconnected, and hence all the centrality measures but $C_f$ and {\tt agg\_deg} are not uniquely determined.

Concerning the computation of $C_f$, we have that the layer centrality vector converges first after 21 iterations, while for the method to stop it took one more iteration.

\begin{table}
\centering
\caption{\textsf{EUair}: Pearson's correlation coefficient between any two centrality measures tested in this section. 
For the $f$-eigenvector centrality we used $\alpha = 2.1$ and $\beta = 2$.}
\label{tab:pears}
\begin{tabular}{c|ccccc}
& $C_f$          & {\tt eig\_ver} & {\tt eig\_cen} & {\tt agg\_eig} & {\tt agg\_deg} \\
\hline
$C_f$          &    -           &  0.89   & 0.81  &  0.86  &  0.88\\
{\tt eig\_ver} &   0.89         &    -    & 0.97  &  0.99  &  0.97\\
{\tt eig\_cen} &   0.81         &  0.97   &  -    &  0.98  &  0.94\\
{\tt agg\_eig} &   0.86         &  0.99   & 0.98  &   -    &  0.97\\
{\tt agg\_deg} &   0.88         &  0.97   & 0.94  &  0.97  &   -  \\
\hline
\end{tabular}
\end{table}

\begin{figure}
\includegraphics[width=.9\textwidth]{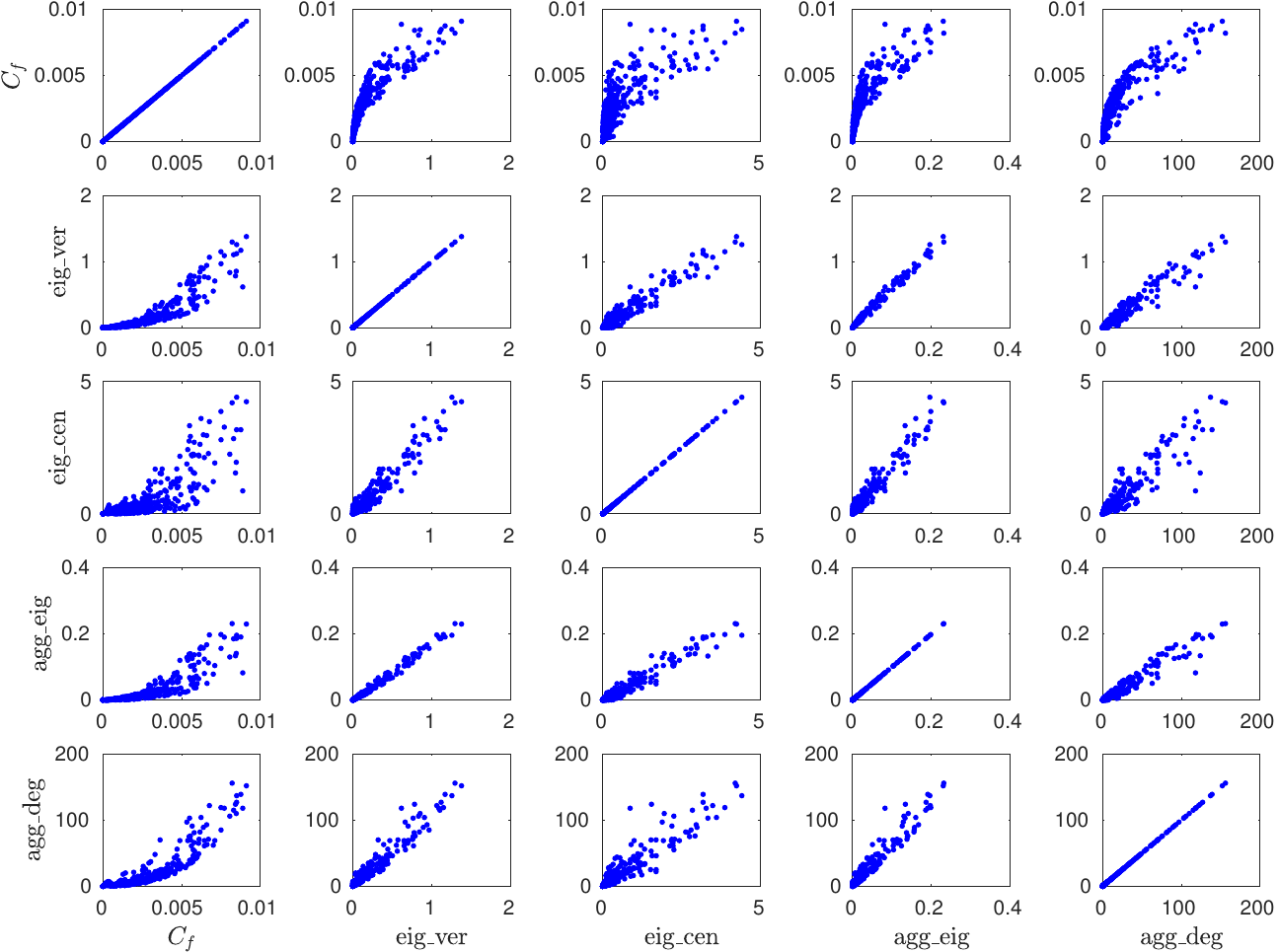}
\caption{\textsf{Euair}: Scatter plots between the different centrality vectors. For the $f$-eigenvector centrality we used $\alpha = 2.1$, $\beta = 2$.}
\label{fig:scat_EUair}
\end{figure}

Table~\ref{tab:pears} contains the Pearson's correlation coefficient between any two of the centrality vectors computed 
in the test. 
This coefficient is a measure of linear correlation and varies between $-1$ and $1$, $1$ indicating perfect linear correlation. 
From the table, we see that every pair of centrality vectors displays a high value of this coefficient, with a peak at 
$0.99$ between {\tt eig\_ver} and {\tt agg\_eig}. 
The lower values for this coefficient of linear correlation are achieved when comparing $C_f$ with 
other vectors. 
It is easily understood, however, that correlation coefficients do not represent the best way to compare 
ranking vectors; indeed, two vectors may display a high linear correlation, but the rankings provided by them might   
significantly differ. 
To have a better understanding of our results, we thus show in Figure~\ref{fig:scat_EUair} the scatter plot between the different 
centrality vectors. 
It is straightforward to see that, although $C_f$ has a high correlation with the other measures, the derived rankings are pretty 
different. 
Indeed, the nodes having the highest score with respect to the $f$-eigenvector centrality do not correspond to those ranked at the top 
by the other measures. This can also be seen from Table~\ref{tab:cen_EUair}, where we have listed the top 10 ranked nodes according to the different centrality measures.  
Although the first ranked node according to the $f$-eigenvector centrality is ranked first by the eigenvector versatility and second by all the other measures,  
when it comes to the second ranked node according to $C_f$,  
we find it to be ranked, e.g., $34$th by the eigenvector versatility and $71$st by {\tt eig\_cen}.  

Figure \ref{fig:europa} shows the geographical locations of the top five European airports according to the computed $f$-node eigenvector centrality (left) and eigenvector versatility (right).

\begin{table}
\centering
\caption{\textsf{EUair}: top 10 ranked nodes according to the centrality measures tested in this section. 
For the $f$-eigenvector centrality we used $\alpha = 2.1$ and $\beta = 2$.}
\label{tab:cen_EUair}
\begin{tabular}{c|cccccccccc}
\hline
$C_f$          &   50  &  12 &   38  &  40  &   2 &  108  & 252  & 166  &  15 &   57 \\
{\tt eig\_ver} &   50  &  15 &   40  &  38  &  83 &    2  & 166  &   7  &  64 &   34 \\
{\tt eig\_cen} &   40  &  50 &   15  &  83  &  22 &   64  &  14  &   7  &  38 &    2 \\
{\tt agg\_eig} &   15  &  50 &   83  &  64  &  40 &   38  &   7  &   2  & 166 &   66 \\
{\tt agg\_deg} &   15  &  50 &   38  &  40  &   2 &  252  &  64  &  83  &   7 &   12 \\
\hline
\end{tabular}
\end{table}

\begin{figure}
    \begin{tikzpicture}[scale=.89]
    \node[anchor=south west,inner sep=0] at (0,0) {\includegraphics[width=.408\textwidth,clip,trim=2cm 0 2cm 0]{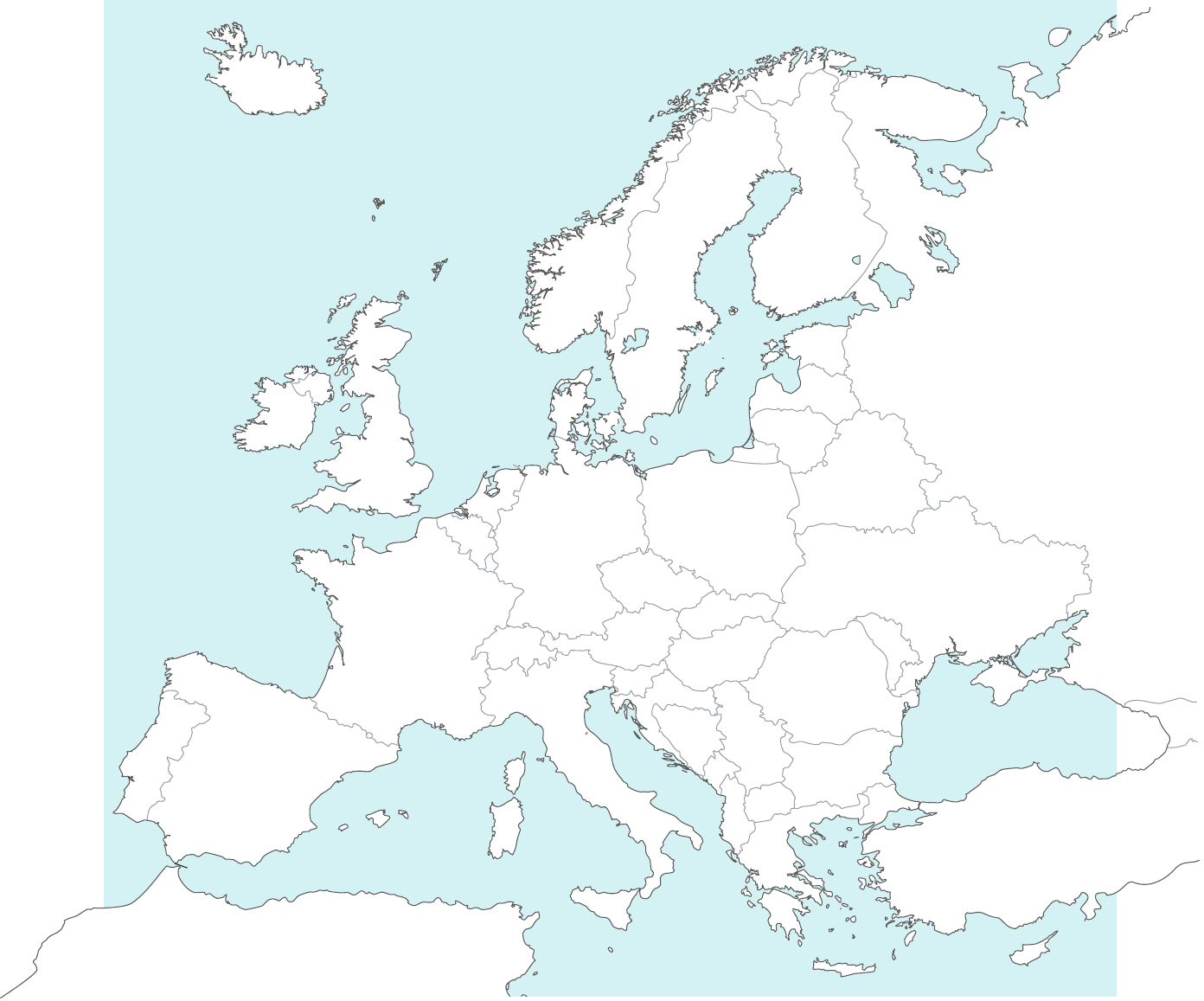}};
    \draw[gray,thick] (0,0) rectangle (6,6);
    \draw[draw=black, fill=red, opacity=0.7] (.85,1.4) circle (.3);%Madrid LEMD
    \draw[draw=black, fill=red, opacity=0.7] (1.8,3.1) circle (.25);%London EGSS
    \draw[draw=black, fill=red, opacity=0.7] (2.9,2.35) circle (.2);%Munich EDDM
    \draw[draw=black, fill=red, opacity=0.7] (1.6,1.43) circle (.15);%Barcellona LEBL
    \draw[draw=black, fill=red, opacity=0.7] (2.55,2.65) circle (.1);%Frankfurt EDDF
    \end{tikzpicture}
    \hfill
    \begin{tikzpicture}[scale=.89]
    \node[anchor=south west,inner sep=0] at (0,0) {\includegraphics[width=.408\textwidth,clip,trim=2cm 0 2cm 0]{europe2.eps}};
    \draw[gray,thick] (0,0) rectangle (6,6);
    \draw[draw=black, fill=red, opacity=0.7] (.85,1.4) circle (.3);%Madrid LEMD
    \draw[draw=black, fill=red, opacity=0.7] (2.35,3.0) circle (.25);%Amsterdam EHAM
    \draw[draw=black, fill=red, opacity=0.7] (1.6,1.43) circle (.2);%Barcellona LEBL
    \draw[draw=black, fill=red, opacity=0.7] (2.9,2.35) circle (.15);%Munich EDDM
    \draw[draw=black, fill=red, opacity=0.7] (2.93,1.28) circle (.1);%Rome LIRF
    \end{tikzpicture}
    \caption{Red dots show the geographical locations of the top five European airports according to $C_f$ (left) and {\tt eig\_ver} (right). The larger the dot, the higher the corresponding ranking.}\label{fig:europa}
\end{figure}

To provide further insight and have a better understanding of how far the $f$-node eigenvector centrality is 
from the other centrality measures, we look at the {\it intersection similarity}~\cite{FKS03} 
between the derived ranking vectors. 
The intersection similarity is a measure used to compare the top $K$ entries of two ranked lists that 
may not contain the same elements.  It is defined as follows: let $\calL^{1}$ and $\calL^{2}$ be two 
ranked lists, and let us call $\calL^{j}_k$ the list of the top $k$ elements listed in $\calL^{j}$, for 
$j=1,2$. 
Then, the {\it top $K$ intersection similarity between $\calL^{1}$ and $\calL^{2}$} is defned as 
\begin{equation*}
{\rm isim}_K(\calL^{1},\calL^{2}) = \frac{1}{K}\sum_{k=1}^K\frac{|\calL^{1}_k\Delta\calL^{2}_k|}{2k},
\label{eq:isim}
\end{equation*}
where $|S|$ denotes the cardinality of the set $S$ and $\calL^{1}_k\Delta\calL^{2}_k$ is the symmetric difference between $\calL^{1}_k$ and $\calL^{2}_k$. 
When the ordered sequences contained in $\calL^{1}$ and $\calL^{2}$ are completely different, then the intersection similarity between the two is maximum and it is equal to 1. 
On the other hand, the intersection similarity between two lists is equal to $0$ 
if and only if the two ordered sequences coincide.   

\begin{figure}
\includegraphics[width=\textwidth]{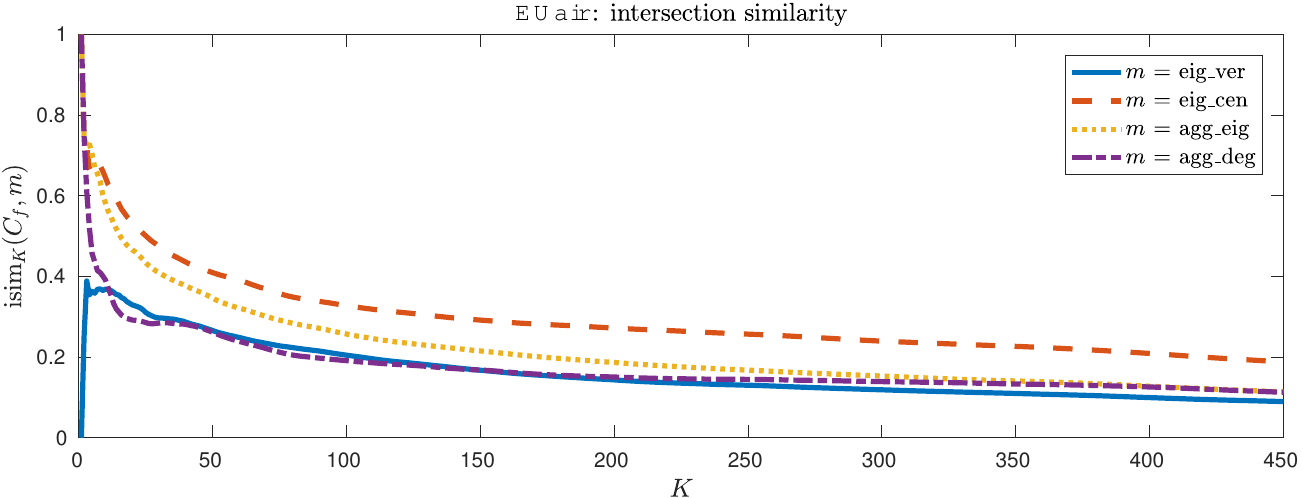}
\caption{\textsf{EUair}: plot of the intersection similarity ${\rm isim}_K(C_f,\cdot)$ between the ranking derived from the $f$-eigenvector centrality and other ranking 
vectors, for $K = 1,2,\ldots,n$. For the computation of $C_f$ we used $\alpha = 2.1$ and $\beta = 2$. Here $C_f$, {\tt eig\_ver}, {\tt eig\_cen}, {\tt agg\_eig},  and {\tt agg\_deg} denote the associated ranking vectors.}
\label{fig:isim}
\end{figure}

In Figure~\ref{fig:isim} we display the evolution of the intersection similarity, as a function of $K$, between the ranking derived from $C_f$ and 
those obtained using the scores provided by the other four centrality measures. %\footnote{With a little abuse of notation, in Figure~\ref{fig:isim}  we are denoting by $C_f$ (resp. {\tt eig\_ver}, {\tt eig\_cen} {\tt agg\_eig}, or {\tt agg\_deg}) the corresponding ranking vector.} 
Figure~\ref{fig:isim} shows that, especially for small values of $K$, the intersection similarity between the ranking derived from $C_f$ and 
all the other centralities is high. The only exception is represented by the eigenvector versatility; however, its  intersection similarity with $C_f$ grows quickly and reaches high values (around 0.4) already for small values of $K$ (say, $K=5$). 
This tells us that the list of the top ranked nodes obtained with the new centrality measure differs significantly from the other measures, as the results in 
Table~\ref{tab:cen_EUair} already showed for $K = 10$. 

Before moving on to the second data set, we want to understand how different choices of the parameters influence the derived rankings. 
 In Figure~\ref{fig:varya_EUair} we display the results obtained when $\beta=2$ is kept fixed and $\alpha = [2.1, 2.5, 2.7, 3, 4, 5, 10]$: the top plots are spaghetti plots representing the changes in the rankings, while 
the bottom plots display the time (in seconds) and the number of iterations required for the computation of the $f$-eigenvector centrality 
for the different choices of $\alpha$. 
It is clear from the top plots that the rankings do not depend heavily on the choice of $\alpha$, since the lines in the spaghetti plot (each representing the ranking of a certain node) are almost all completely horizontal. 
Concerning the bottom plots, both iterations count and execution time decrease as $\alpha$ increases, confirming that the convergence rate follows the powers of $\rho=\frac{\sqrt{8\alpha+\beta}+\sqrt{\beta}}{2\alpha\sqrt{\beta}}$,  as discussed in Theorem \ref{cor:convergence}. 
In more detail, since $\beta = 2$, we have that $\rho\approx 1/\sqrt{\alpha}$; thus, from \eqref{eq:formula_k} it follows that the number of iterations $k_*$ required to achieve convergence decays as the inverse of the logarithm of $\alpha$, i.e., there exist constants $c_1$ and $c_2$ such that  $k_* \approx \phi(\alpha)=c_1/\ln(\alpha)+c_2$.
We display $\phi(\alpha)$ for suitable values of $c_1$ and $c_2$ with a solid line in the bottom right plot of Figure~\ref{fig:varya_EUair}. The actual number of iterations required in our computations for different values of 
$\alpha$ closely follows the expected behaviour described by $\phi(\alpha)$. In particular, higher values of $\alpha$ correspond to fewer iterative steps, as already noticed in Remark~\ref{rem:computation}. 
This has computational relevance; indeed, it is not difficult to observe that when $\mathcal G$ has sparse layers all the measures can be computed with $O(Ln)$ flops per iterative step. However, Figure~\ref{fig:varya_EUair} shows that tuning $\alpha$  allows us to reduce the overall timing and number of iterations required to compute our centrality vector, while simultaneously returning a meaningful ranking.

\begin{figure}
\includegraphics[width=.9\textwidth]{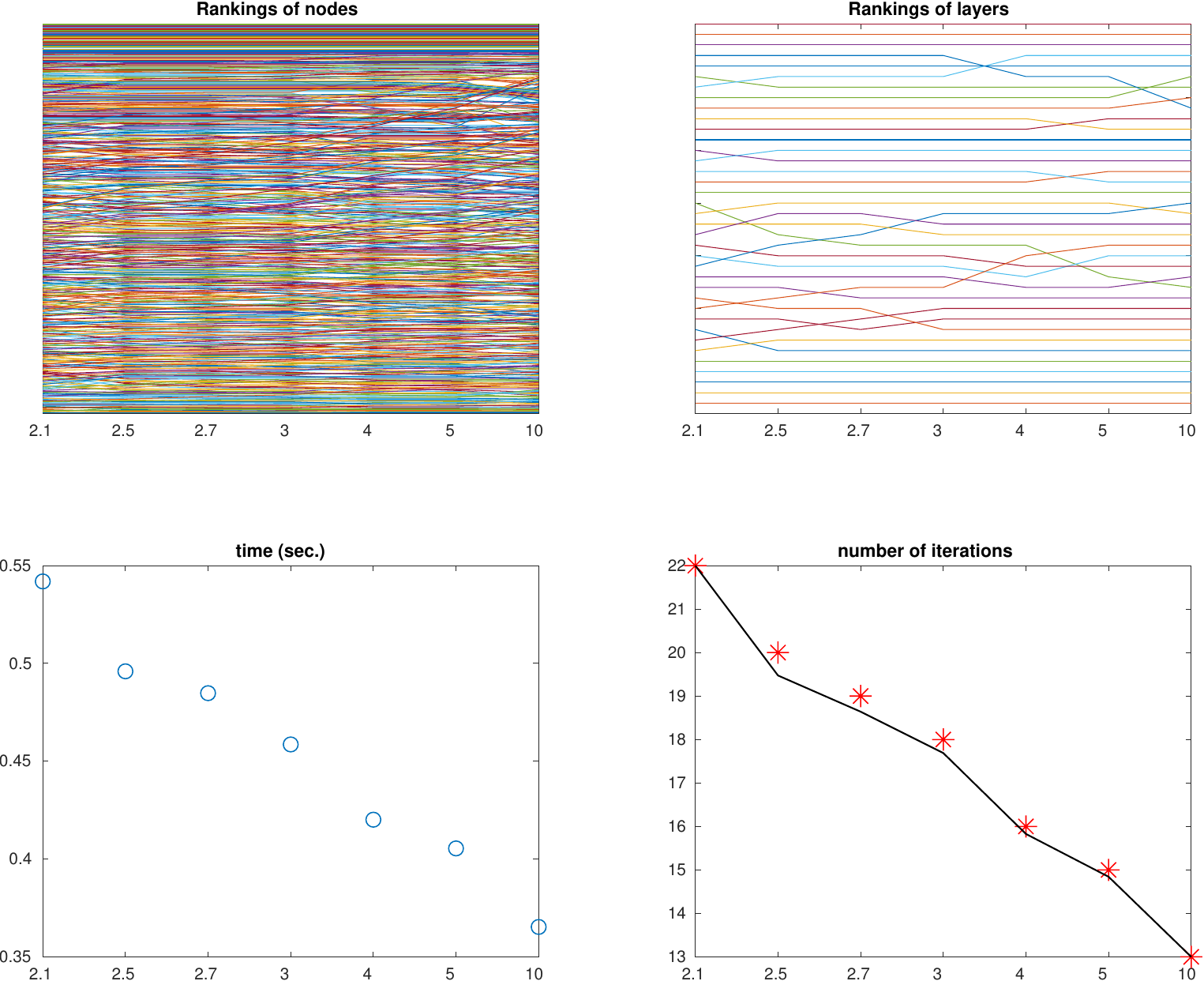}
\caption{\textsf{Euair}: Top - evolution of the rankings obtained using the $f$-eigenvector centrality for nodes and layers when $\alpha = [2.1, 2.5, 2.7, 3, 4, 5, 10]$ and $\beta = 2$. Bottom - timing in seconds and overall number of iterations required to compute $C_f$ when $\alpha$ varies, with the solid line representing the curve $c_1/\ln(\alpha)+c_2$ for suitable values of $c_1$ and $c_2$.}
\label{fig:varya_EUair}
\end{figure}

\begin{figure}
\includegraphics[width=.9\textwidth]{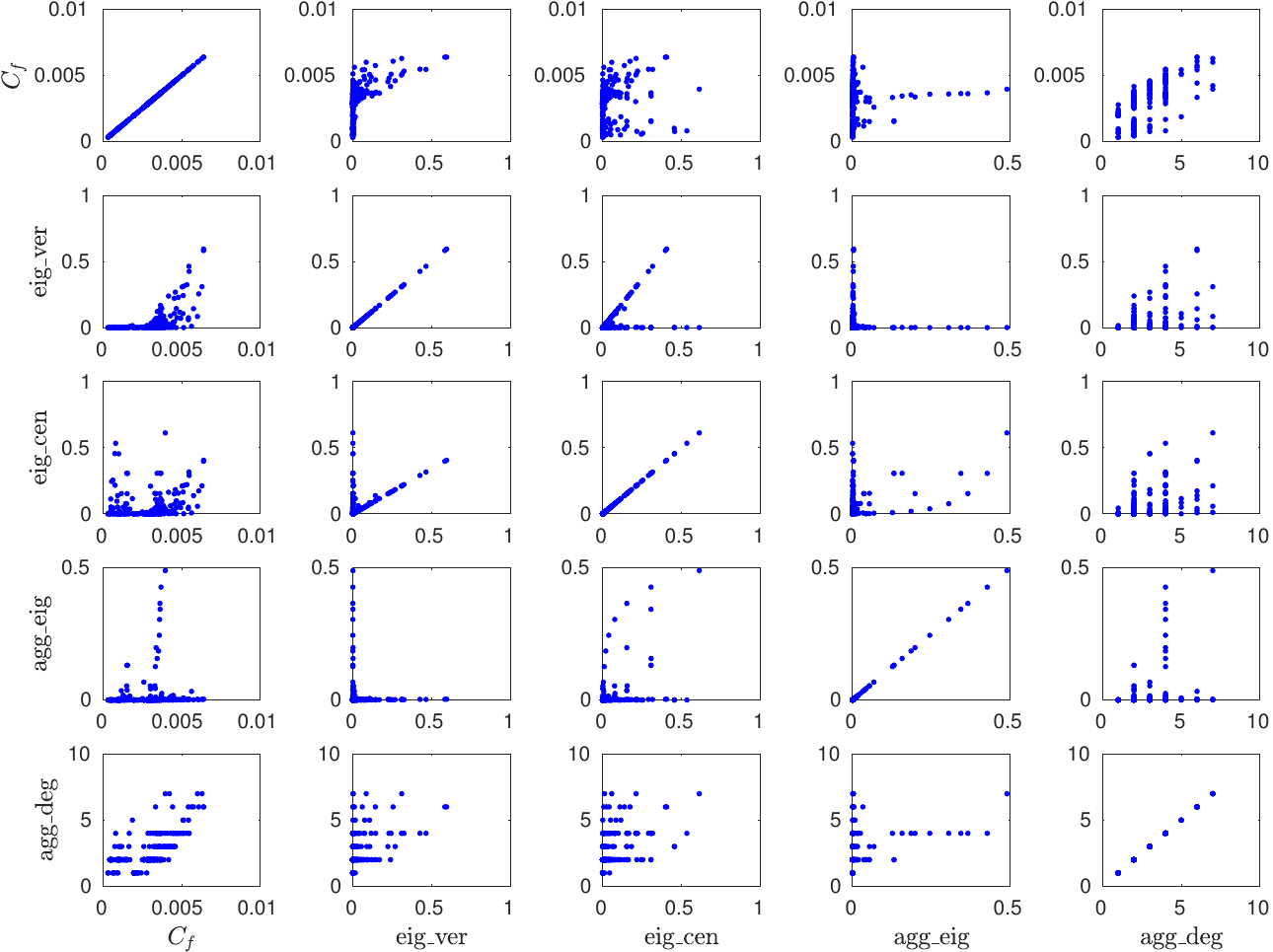}
\caption{\textsf{London}: Scatter plots between the different centrality vectors. $\alpha = 2.1$, $\beta = 2$.}
\label{fig:scat_london}
\end{figure}

 \subsubsection*{London city trasportation network}
We now move on to the dataset \textsf{London}, which represent the London city transportation network~\cite{DataDD,DDSBA14}. 
The 369 nodes correspond to train stations and the existing routes between them are the edges in each of the three layers this multilayer consists of.
The stations of the Underground, Overground, and DLR are considered.
We used the unweighted version of this undirected multiplex network. The matrix $A_{\rm agg}(\bone)$ is irreducible and hence both {\tt eig\_ver} and {\tt agg\_eig} are well-defined for this dataset. 
All layers are non-empty and disconnected, and the nodes have all positive aggregate degree. 
Therefore the $f$-eigenvector centrality vectors will be positive. 
As before, we pick $\alpha = 2.1$ and $\beta =2$.

The layer centrality vector converges first (23 iterations), while the other vector converges at iteration 24, thus making the method stop. 
The Pearson's correlation coefficients between any two centrality vectors are displayed in Table~\ref{tab:pears2}. 
For this example, no pair of vectors displays linear correlation. 
This can be seen also by looking at the scatter plots in Figure~\ref{fig:scat_london}.  
%It is interesting to notice that all the centrality vectors, apart from the $f$-eigenvector centrality and the aggregate degree centrality vectors, are non-negative, with several zero entries. 

\begin{table}
\centering
\caption{\textsf{London}: Pearson correlation coefficient between any two centrality measures tested in this section. 
For the $f$-eigenvector centrality we used $\alpha = 2.1$ and $\beta = 2$.}
\label{tab:pears2}
\begin{tabular}{c|ccccc}
& $C_f$          & {\tt eig\_ver} & {\tt eig\_cen} & {\tt agg\_eig} & {\tt agg\_deg} \\
\hline
$C_f$          &  -      &   0.55  &  0.20 &   0.10  &  0.60\\
{\tt eig\_ver} &    0.55 &   -     &  0.53 &  -0.06  &  0.44\\
{\tt eig\_cen} &    0.20 &   0.53  &   -   &   0.44  &  0.48\\
{\tt agg\_eig} &    0.10 &  -0.06  &  0.44 &     -   &  0.31\\
{\tt agg\_deg} &    0.60 &   0.44  &  0.48 &   0.31  &  -   \\
\hline
\end{tabular}
\end{table}

\begin{figure}
\includegraphics[width=\textwidth]{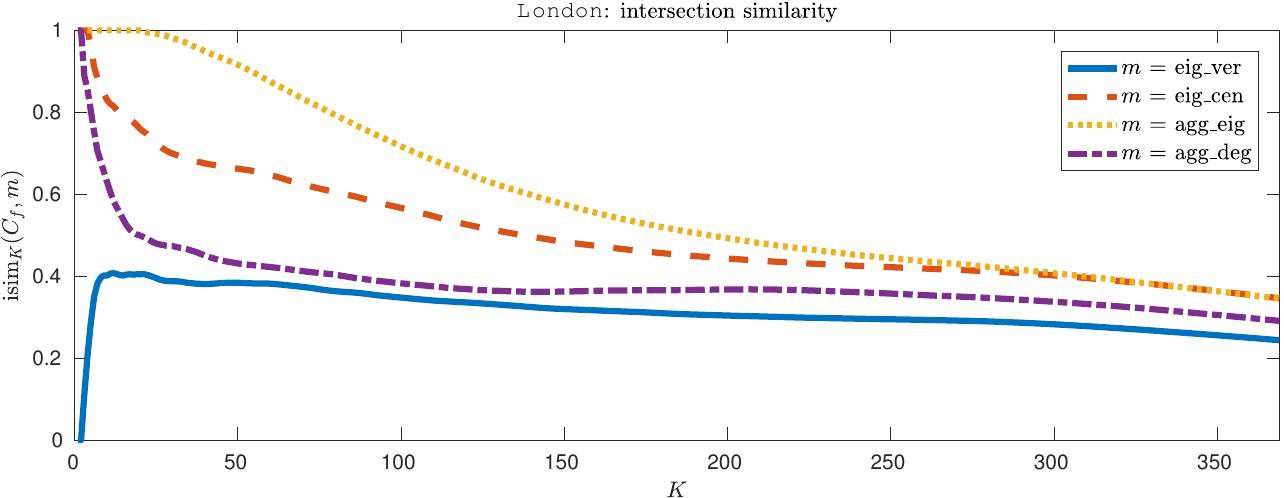}
\caption{\textsf{London}: plot of the intersection similarity ${\rm isim}_K(C_f,\cdot)$ between the ranking derived from the $f$-eigenvector centrality and other ranking 
vectors, for $K = 1,2,\ldots,n$. For the computation of $C_f$ we used $\alpha = 2.1$ and $\beta = 2$. Here $C_f$, {\tt eig\_ver}, {\tt eig\_cen}, {\tt agg\_eig},  and {\tt agg\_deg} denote the associated ranking vectors.}
\label{fig:isim2}
\end{figure}

In Figure~\ref{fig:isim2} we display the intersection similarity between the ranking obtained from $C_f$ and the other ranking vectors as a function of $K$, i.e., the number of
top ranked elements considered. %\footnote{As before, in Figure~\ref{fig:isim2} we are denoting  by $C_f$ (resp. {\tt eig\_ver}, {\tt eig\_cen}, {\tt agg\_eig},  or {\tt agg\_deg}) the associated ranking vector.}  
As for the previous dataset, the value of the intersection similarity between the top $K$ entries of the ranking vector derived from $C_f$ and all the 
other methods is large; this shows that, especially for small $K$, the top ranked nodes according to the $f$-eigenvector centrality differ from those 
identified as most important from the other networks. 
To show this, in Table~\ref{tab:cen_london} we list the top 10 ranked nodes according to the different centrality measures. 

\begin{table}
\centering
\caption{\textsf{London}: top 10 ranked nodes according to the centrality measures tested in this section. 
For the $f$-eigenvector centrality we used $\alpha = 2.1$ and $\beta = 2$.}
\label{tab:cen_london}
\begin{tabular}{c|cccccccccc}
\hline
$C_f$          &    69  &  68 &   28 &  181 &  182  &  35 &   46 &   29 &  214 &    9 \\
{\tt eig\_ver} &    69  &  68 &  214 &   29 &  215  & 207 &   28 &  282 &  121 &  181 \\
{\tt eig\_cen} &     4  &  13 &  291 &  325 &   69  &  68 &  214 &  339 &  261 &  225 \\
{\tt agg\_eig} &     4  & 225 &  226 &  259 &  306  & 305 &  260 &  264 &  339 &    3 \\
{\tt agg\_deg} &     4  &  28 &  182 &  220 &    2  &  35 &   46 &   50 &   68 &   69 \\
\hline
\end{tabular}
\end{table}

\section{Conclusions and future work}
In this paper we have introduced the $f$-eigenvector centrality:  a new multi-dimensional eigenvector-based centrality measure for nodes and layers in multiplex networks with undirected layers. 
We have shown that in order to guarantee well-posedness of the definition, non-linearity has to be introduced in the model extending the classical Bonacich index to the multidimensional setting of multiplexes. 
We have further proved that existence and uniqueness of the $f$-eigenvector centrality can be guaranteed for any non-negative $3^{\rm rd}$-order tensor that satisfies very mild conditions. In particular the $f$-eigenvector centrality can be computed efficiently without any a priori analysis of the irreducibility of the multi-layer network. 
We compared the newly introduced centrality measures with the eigenvector-based centrality measures found in the literature, and we showed that it provides different rankings on the two real world data sets we tested. 

Future work will focus on  the extension of these results to the case of multiplex networks with directed layers. We also plan to investigate how to introduce in our model any available information on the importance of layers that is not deducible from the network topology. 

\section{Acknowledgements}
This work used pre-existing data that is publicly available from {\tt http://deim.urv.cat/$\sim$manlio.dedomenico/data.php}.

The authors would like to thank two anonymous referees for their comments.
%%%%%%%%%%%%%%%%%%%%%%%%%%%%%%%%%%%g%%%%%%%%%%%%%%%%%%%%%%%%%%%%%%%%%%%%%%%
%%%%%%%%%%%%%%%%%%%%%%%%%% BIBLIOGRAPHY %%%%%%%%%%%%%%%%%%%%%%%%%%%%%%%%%%
%%%%%%%%%%%%%%%%%%%%%%%%%%%%%%%%%%%%%%%%%%%%%%%%%%%%%%%%%%%%%%%%%%%%%%%%%%

% --------------------------------------------------------------------------------------------------

\end{document}